\setlist{nolistsep}
\newcolumntype{x}[1]{>{\centering\arraybackslash}p{#1}}
\DeclareMathOperator{\Tr}{Tr}
\newcommand{\FGLS}{\text{FGLS}}
\newcommand{\R}{\mathbb{R}}
\newcommand{\argmin}{\arg\!\min}
\newcommand{\vep}{\varepsilon}
\newcommand{\rar}{\rightarrow}
\newcommand{\mb}{\mathbb}
\newcommand{\mc}{\mathcal}
\newcommand{\bmat}[1]{\begin{bmatrix}#1\end{bmatrix}}
\newtheorem{proposition}{Proposition}
\newtheorem{definition}{Definition}
\newif\ifplot
\newcommand{\LB}{\text{LB}}
\newcommand{\np}{p}
\newcommand{\ms}{m_{\text{stop}}}
\newcommand{\AIC}{\text{AIC}}
\newcommand{\cFGLS}{\text{cFGLS}}
\definecolor{CornflowerBlue}{rgb}{0.258824,0.258824,0.435294}
\definecolor{SkyBlue}{rgb}{0.196078,0.6,0.8}
\definecolor{dblue}{rgb}{.098,.243,.424}
\definecolor{lblue}{rgb}{.33,.57,.835}
\definecolor{llblue}{rgb}{.447,.643,.831}
\definecolor{mblue}{rgb}{0.176, 0.380, 0.659}
\definecolor{lcomp}{rgb}{.969,.765,.416}
\definecolor{ddorange}{rgb}{0.624, 0.365, 0}
\definecolor{dorange}{rgb}{0.72, 0.506, 0.125}
\definecolor{lorange}{rgb}{0.961, 0.678, 0.165}
\definecolor{lgreen}{rgb}{.812,.969,.435}
\definecolor{lyellow}{rgb}{1,.859,.451}
\definecolor{dyellow}{rgb}{.651,.482,0}
\definecolor{lrred}{rgb}{1,.6,.451}
\definecolor{dred}{rgb}{.85,.176,0}
\colorlet{mmblue}{blue!50}
\newcommand{\lightercolor}[3]{
    \colorlet{#3}{#1!#2!white}
}
\newcommand{\darkercolor}[3]{
    \colorlet{#3}{#1!#2!black}
}
  \colorlet{dgreen}{green!70!black}
\newcommand{\players}{players }
\newcommand{\player}{player }
\newcommand{\la}{\langle}
\newcommand{\ra}{\rangle}
{\theoremstyle{plain}\newtheorem*{example*}{Example}}
\begin{document}

\title{\LARGE \bf
A Robust Utility Learning Framework via Inverse Optimization}

\author{\IEEEauthorblockN{Ioannis C.
Konstantakopoulos\thanks{\textsuperscript{*}Authors
contributed equally}\textsuperscript{*}, Lillian J. Ratliff\textsuperscript{*}, Ming Jin, S.
Shankar Sastry, Costas J. Spanos}
\thanks{I. Konstantakopoulos, M. Jin, S. Sastry, and C. Spanos are with the
Electrical Engineering and Computer Sciences Department, University of
California, Berkeley, Berkeley, CA 94720. email: {\tt $\{$ioanniskon, jinming,
spanos, sastry$\}$@eecs.berkeley.edu} }
\thanks{L. Ratliff is with the Electrical Engineering Department, University of
Washington, Seattle, WA 98195. email: {\tt ratliffl@uw.edu}}
}

\maketitle

\thispagestyle{empty}
\pagestyle{empty}

\begin{abstract}
   In many smart infrastructure applications,  flexibility in
    achieving
sustainability goals can be gained by engaging end-users.
However, these users often have heterogeneous preferences that are unknown to
the decision-maker tasked with improving operational efficiency. Modeling user
interaction as a
continuous game between non--cooperative players, we propose a robust 
parametric utility learning framework that employs constrained feasible generalized least squares estimation
with heteroskedastic
inference. 
To improve forecasting performance, we extend the robust
utility learning scheme by employing bootstrapping with bagging, bumping, and
gradient boosting ensemble methods. 
Moreover, we estimate the noise covariance which provides approximated
correlations between \players
which we leverage to develop a novel \textit{correlated utility learning} framework.
We apply the proposed methods both to a toy example arising from Bertrand-Nash
competition between two firms 
as well as to data from a social game experiment designed to encourage
energy efficient behavior amongst smart building occupants. Using occupant
voting data for shared resources such as lighting, we simulate the game defined by the estimated
utility functions to  
demonstrate the performance of the proposed methods.

\end{abstract}

\begin{IEEEkeywords}
 Game Theory, Inverse Optimization, Smart Building Energy Efficiency 
\end{IEEEkeywords}

\section{Introduction}
\label{sec:intro}
Due to pervasive utilization of Internet of Things and Cyber-Physical
Systems sensing/actuating platforms, we are increasingly observing human
decision-makers being integrated into operational and
managerial decisions in  infrastructure systems. Their actions can be
leveraged to increase both resilience and sustainability thereby making smart
infrastructure a worthwhile investment.   Smart buildings,
being no exception, are a fundamental component of emerging \emph{smart cities};
their efficient design and operation enables flexibility---e.g., by automatically
shifting or curtailing demand during peak hours---in
making urban spaces sustainable.
More abstractly, in many infrastructure systems there is often an entity acting as
a \emph{planner} (e.g., facility managers, departments of transportation, etc.) that
 introduces incentives or control policies to coordinate
 autonomously acting agents in the system (e.g., selfish human decision-makers)
 so that their collective behavior leads to system-level efficiency gains.

 %
 One approach to designing such policies is to leverage game-theoretic models of
 decision-making in an optimization framework to produce policies that encourage
 or induce behavior that optimizes an objective~\cite{bolton:2005aa,laffont:2002aa}. 
 Often the planner has at best 
 a prior on the
 decision-making model of the individual agents. Such information asymmetries
lead to inefficiencies~\cite{bolton:2005aa, ratliff:2015aa}.  In this paper, we propose a framework for
 estimating decision-making models of self-interested decision-makers
 consuming a shared resource (e.g., lighting in a smart building) that can be
 leveraged in control or incentive design to aid in closing the efficiency gap.

%
To concretize ideas, consider a smart building---an example we will return
to throughout the text. A facilities manager may be incentivized or even tasked to encourage energy
efficient behavior if they are accountable for energy costs or  are required,
e.g., to maintain
an operational excellence measure~(see, e.g.,~\cite{OpEx:2015aa,aswani:2012aa}).
At the same time, the facilities manager 
generally must also
ensure user comfort and productivity~\cite{jin2016occupancy}. Beyond these motivations, demand response
(DR) programs are being rolled out by utility
companies and third-party solution providers with the goal of correcting for
improper load forecasting.
Participating consumers
 decide to change their consumption when DR events are called \cite{jin2017mod}. The facilities manager may
 be required to keep this schedule. 
 
 Smart building technologies enable new
 avenues for facilities managers to keep such a prescribed schedule 
 via automation or integrating
the end-user. 
%
Yet,  in office buildings the occupants, as employees, typically are not
 responsible for paying for the energy resources they consume.
 Hence, there is often a misalignment between the incentives of the
 facilities manager and the occupants. Social games are a means to engage the
 occupants to address these inefficiencies. In Section~\ref{sec:apptosoc}, we
 describe one such social
 game that we designed and implemented on the UC Berkeley campus,
 aimed at incentivizing energy
 efficient consumption of shared resources by leveraging building automation.


%
The broader purpose of this paper is to present a general framework that
leverages game-theoretic concepts to learn models of players'
decision-making in competitive environments such as the building
energy social game described above. The framework supports learning 
 agents' preferences over shared resources as well as understanding how
preferences change as a function of external stimuli such as
 physical control or  incentives. 
Such a framework can be used in the design of incentive mechanisms that realign agents'
preferences with those of the planner---which often represent system-level
performance criteria---through fair
compensation. 


%
More concretely, we model decision-making agents as \emph{utility maximizers} and, using inverse optimization and game-theoretic techniques, we
derive a robust scheme to infer their utility functions.
At the core of our approach is the fact that we
model the agents as non-cooperative \players in a game playing according to a \emph{Nash
equilibrium strategy}. From this point of view, agents are strategic entities
that make decisions based on their own preferences despite 
others. The game-theoretic framework both allows for qualitative insights to be
made about the outcome of such selfish behavior---more so than a simple
prescriptive model---and, more importantly, can be
leveraged in designing mechanisms for incentivizing agents.

We assume a parametric form of utility function
for each \player that is dependent on the decisions of others. 
Correlations between players' decisions are not known \emph{a priori}.  
Assuming observations are approximately Nash equilibria, we use first--
and second--order conditions on player utility functions to construct a
constrained regression model.  
The result is as a constrained Generalized Least Squares
(cGLS) problem with non-spherical noise error
terms.  
Using constrained Feasible Generalized Least Squares (cFGLS), an implementable
version of cGLS, we utilize  heteroskedastic
inference to approximate the correlated
errors.

Noting  that data sets of observed decisions often
    may be small relative to the number of model parameters in practice, we
    employ 
bootstrapping to generate pseudo-data from which we learn
additional estimators. The bootstrapping process allows us to derive an asymptotic approximation of the bias and
standard error of an estimator.
We utilize ensemble methods such as bagging, bumping, and gradient boosting
to extract an estimator from the pseudo-data generated estimators that results
in a reduced forecasting error. The ensemble methods are robust under noise and autocorrelated error terms. 
We apply the robust utility learning framework to a model of
Bertrand-Nash competition between firms in order to illustrate the framework and
its performance. 

Building on the robust utility learning framework, we use the approximated standard
error to derive an  innovative utility learning method in which we
modify
players' utility functions to create a \emph{correlated game}. The resulting correlated
utility learning method leverages
correlations between \players and the ensemble estimators to
minimize the estimation error by optimizing scaling coefficients that appear in
the correlated game utility functions.
Applying this method results in a significant
improvement over the constrained Ordinary Least Squares (cOLS) estimations and outperforms many of the ensemble
methods. It also provides insights into how \players interact with one another
and indicates which \players are potentially forming coalitions. Moreover, this technique is amenable to online implementation
after an initial training period so that by using
cOLS estimators in the
correlated utility learning framework, our adaptive incentive design schemes,
introduced in~\cite{ratliff:2014aa,ratliff:2015aa}, can 
be made robust.

To demonstrate the efficacy of both the robust and correlated utility
learning frameworks, we apply them 
to data generated from the smart building social
game experiment we conducted. We show that estimating the players' utility 
functions via the proposed methods
results in a predictive model that outperforms several other standard
techniques such as Ordinary Least Squares (OLS).

The rest of the paper is organized as follows. We describe the abstracted game
framework for modeling the interaction of agents as well as define equilibrium
concepts in Section~\ref{sec:games}. In Section~\ref{sec:robutility}, we
formulate the robust utility learning framework and provide an algorithm for
implementing it. Section~\ref{sec:bertnash} contains the Bertrand-Nash
competition example and we present the correlated utility learning framework in
Section~\ref{sec:corutility}. 
In Section~\ref{sec:apptosoc}, we describe the social game experimental setup on the UC Berkeley campus within the CREST center\footnote{{\tt
http://crest.berkeley.edu/}},
provide a brief literature review, and present the results of both proposed
utility learning methods applied
to data from the social game. We make concluding remarks and discuss
future directions in Section~\ref{sec:conclusion}.
 

%

\section{Game Framework}
\label{sec:games}
In this section, we abstract the agents' decision-making processes in a
game--theoretic framework.

%

%

\subsection{Agent Decision-Making Model}
Consider $\np$ agents\footnote{We refer to the decision-makers
    as \emph{agents} and use the term interchangeably with \emph{players}.}---i.e.~decision-making entities---indexed by the set
$\mc{I}=\{1,\ldots, \np\}$.
Each agent is modeled as a \emph{utility maximizer} that seeks to select $x_i\in
\mb{R}$ by optimizing 
\begin{equation}
  f_i(x_i,x_{-i})=f_i^{\text{nom}}(x_i,x_{-i})+f_i^{\text{inc}}(x_i, x_{-i}).
  \label{eq:utilnominc}
\end{equation}
where $f_i^{\text{nom}}(x_i,x_{-i})$ and $f_i^{\text{inc}}(x_i, x_{-i})$ are the
nominal and incentive components, respectively, of agent $i$'s utility function
and where
$x_{-i}=(x_1, \ldots, x_{i-1}, x_{i+1}, \ldots, x_n)\in \mb{R}^{n-1}$ is the
collective choices of all  agents excluding the $i$--th agent\footnote{Note that while for
notational simplicity we assume that $x_i\in \mb{R}$, the work easily extends to
 a higher
dimensional choice vector for each agent.}. 

The choice $x_i$ abstracts the agent's decision; it could represent, e.g.,
how much of a particular resource they choose to consume.
The nominal component of $f_i$ captures the agent's individual preferences over
$x_i$ and may depend on the decisions of others $x_{-i}$. The incentive component models the
portion of the agent's utility that can be designed by the planner; it also may
depend on the decisions of other agents. 

Agent $i$'s optimization problem is also subject to
constraints; the constraint set is given by $\mc{C}_i=\{x_i|\ h_{i,j}(x_i)\geq 0, j=1,\ldots, \ell_i\}$
where each $h_{i,j}$ is assumed to be a concave function of $x_i$. Such
constraints may encode cyber or physical constraints arising from the underlying
system---in the social game example presented in Section~\ref{sec:rusocialgame},
we will see that these constraints are physical bounds.  Thus,
given  $x_{-i}$, agent $i$ faces the following optimization problem:
\begin{equation}
  \max\{f_i(x_i,x_{-i})|\ x_i\in \mc{C}_i\}.
  \label{eq:opt-new}
\end{equation}

\subsection{Game Formulation}
The game $(f_1,\ldots, f_{\np})$ is a continuous game on a convex strategy space
$\mc{C}=\mc{C}_1\times\cdots\times \mc{C}_\np$.
To model the outcome of the strategic
interactions of agents, we use the \emph{Nash equilibrium} concept.
\begin{definition}
  A point $x\in \mc{C}$ is a {Nash equilibrium} for the game $(f_1, \ldots,
  f_\np)$ on $\mc{C}$
  if, for each $i\in \mc{I}$,
\begin{equation}
  f_i(x_i, x_{-i})\geq f_i(x_i',x_{-i})\ \ \forall \ x_i'\in \mc{C}_i.
  \label{eq:ineq}
\end{equation}
We say $x\in \mc{C}$ is an $\vep$--Nash equilibrium for $\vep>0$ if the above
inequality is relaxed:
\begin{equation}
  f_i(x_i, x_{-i})+\vep\geq f_i(x_i',x_{-i})\ \ \forall \ x_i'\in \mc{C}_i.
  \label{eq:ineq}
\end{equation}
\end{definition}
We say a point is a \emph{local Nash equilibrium} (respectively, a
\emph{$\vep$--local Nash equilibrium}) if there exists $W_i\subset \mc{C}_i$ such
that $x_i\in W_i$ and the above inequalities hold for all $x_i'\in W_i$.

If each $f_i$ is concave in $x_i$ and $\mc{C}$ is convex, then the game is a
$\np$--person concave game. In the seminal work by
Rosen~\cite{Rosen:1965tg}, it was shown that a (pure) Nash equilibrium exists for
every
concave game.

The Lagrangian of agent $i$'s optimization problem is given by
\begin{equation}
 \textstyle L_i(x_i, x_{-i}, \mu_i)=f_i(x_i, x_{-i})+\sum_{j\in
 \mc{A}_i(x_i)}\mu_{i,j}h_{i,j}(x_i)
  \label{eq:lagrangian}
\end{equation}
where $\mc{A}_i(x_i)$ is the active constraint set at $x_i$ and $\mu=(\mu_1,
\ldots, \mu_\np)$ with $\mu_i=(\mu_{i,j})_{j=1}^{\ell_i}$ are the Lagrange
multipliers. 
Assuming appropriate smoothness conditions on each $f_i$ and $h_{i,j}$, the differential game form~\cite{ratliff:2015aa},\cite{ratliff:2016aa}---which
characterizes
the first--order conditions of the game---is given
by
\begin{equation}
  \omega(x, \mu)=[D_1L_1(x, \mu_1)^\top\ \cdots\ D_\np L_\np(x, \mu_\np)^\top]^\top 
  \label{eq:omega}
\end{equation}
where $D_iL_i$ denotes the derivative of $L_i$ with respect to $x_i$.

Consider agent $i$'s optimization problem \eqref{eq:opt-new} with $x_{-i}$
fixed and where each $f_i$ and $h_{i,j}$ for $j\in \{1,\ldots, \ell_i\}$,
$i\in\mc{I}$ are concave, twice continuously differentiable functions. Then, assuming an appropriate constraint qualification
condition~\cite{Bertsekas:1999fk}, the necessary and sufficient conditions for
optimality of a point $x_i$ are as follows: there exists
$\mu_i\in\mb{R}^{\ell_i}_{+}$
such that (i) $D_iL_i(x, \mu_i)=0$; (ii) $\mu_i
h_{i,j}(x_i)=0$ for each $j\in\{1,\ldots, \ell_i\}$; (iii)
$h_{i,j}(x_i)\geq 0$ for each $j\in\{1, \ldots, \ell_i\}$. 
Regardless of the concavity assumption, the point $x_i$ is a local maximizer if $\mu_{i,j}>0$ and $z^\top D_{ii}^2L_i(x,
\mu_i) z<0$ for all $z\neq 0$ such that $D_ih_{i,j}(x_i)^\top z=0$ for
$j\in A_i(x_i)$. 
Such conditions motivate the following definition.


\begin{definition}[Differential Nash
        Equilibrium]
    Consider a game $(f_1,\ldots, f_{\np})$ on $\mc{C}$ where  
$f_i$ and $h_{i,j}$ for each $j\in\{1,\ldots, \ell_i\}$ and $i\in\mc{I}$ are
twice continuously differentiable. 
A point $x\in\mc{C}\subset\mb{R}^{\np}$ is a
differential Nash equilibrium if there is a $\mu\in
\mb{R}^{\sum_{i=1}^{\np}\ell_i}$ such that the pair $(x, \mu)$
satisfies
(i) $\omega(x, \mu)=0$; (ii) for each $i\in\mc{I}$,  
$z^\top D_{ii}L_i(x, \mu_i)z<0$ for all $z\neq 0$ such that
  $D_ih_{i,j}(x_i)^\top z=0$, and $\mu_{i,j}> 0$ for $j\in
  A_i(x_i)$.
 If, for a given $\vep>0$, (i')  $\omega(x,
  \mu)=\vep$ with all the other conditions being satisfied, then $x$ is a
  $\vep$--differential Nash equilibrium.
  \label{def:diffnash}
 \end{definition}
The above definition extends the definition of a differential Nash (if we
restrict to Euclidean spaces), first
appearing in~\cite{ratliff:2016aa}, to constrained games on Euclidean spaces. Using this definition, we can also extend
Proposition~1 of \cite{ratliff:2016aa}, again where strategy spaces are
restricted to be subsets Euclidean.

 \begin{proposition}
   A differential Nash equilibrium of the $\np$--person concave game $(f_1, \ldots,
   f_\np)$ on $\mc{C}$ is a local Nash equilibrium. 
   \label{prop:suffcond}
 \end{proposition}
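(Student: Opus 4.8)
The plan is to reduce the claim to the classical second-order sufficient conditions for a single constrained optimization problem, applied separately to each player. Let $(x,\mu)$ be a differential Nash equilibrium of the concave game $(f_1,\ldots,f_\np)$ in the sense of Definition~\ref{def:diffnash}, and fix an arbitrary $i\in\mc{I}$. Freezing the rivals' choices at $x_{-i}$, consider player $i$'s problem $\max\{f_i(x_i,x_{-i})\,:\,x_i\in\mc{C}_i\}$. First I would observe that the pair $(x_i,\mu_i)$ is a KKT point of this problem: condition (i) of Definition~\ref{def:diffnash}, namely $\omega(x,\mu)=0$, gives the stationarity equation $D_iL_i(x,\mu_i)=0$ in its $i$-th block; primal feasibility $h_{i,j}(x_i)\ge 0$ holds because $x\in\mc{C}$; and complementary slackness holds because the Lagrangian~\eqref{eq:lagrangian} carries only the multipliers $\mu_{i,j}$ for $j\in A_i(x_i)$ (equivalently, one extends $\mu_i$ by zero on the inactive constraints, while $h_{i,j}(x_i)=0$ on the active ones).

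Next I would invoke the second-order sufficiency statement recorded in the text immediately preceding Definition~\ref{def:diffnash}. Condition (ii) of Definition~\ref{def:diffnash} supplies both strict complementarity, $\mu_{i,j}>0$ for every $j\in A_i(x_i)$, and the curvature condition $z^\top D_{ii}L_i(x,\mu_i)z<0$ for all $z\neq 0$ with $D_ih_{i,j}(x_i)^\top z=0$, $j\in A_i(x_i)$. Together with an appropriate constraint qualification, these are exactly the conditions under which $x_i$ is a strict local maximizer of $f_i(\cdot,x_{-i})$ over $\mc{C}_i$; strict complementarity is what forces the critical cone to coincide with the tangent space $\{z:D_ih_{i,j}(x_i)^\top z=0,\ j\in A_i(x_i)\}$ on which the curvature condition is imposed, so no extra argument is needed there. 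Concretely, this yields a neighborhood $W_i\subset\mc{C}_i$ of $x_i$ such that $f_i(x_i,x_{-i})\ge f_i(x_i',x_{-i})$ for all $x_i'\in W_i$.

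Finally, since $i\in\mc{I}$ was arbitrary, the family $\{W_i\}_{i\in\mc{I}}$ witnesses precisely the defining inequalities of a local Nash equilibrium, which proves the proposition. I would also remark that, because each $f_i$ is concave in $x_i$ and each $\mc{C}_i$ is convex, the KKT conditions established in the first step are already \emph{sufficient} for global optimality of player $i$'s problem—as noted in the text just before Definition~\ref{def:diffnash}—so one could in fact conclude that $x$ is a (global) Nash equilibrium; but local optimality is all that is asserted here, and the statement is framed this way so that the same argument carries over to the non-concave setting.

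The only step I expect to require care is the constrained second-order analysis in the second paragraph: justifying that the curvature condition on the tangent space of the active constraints, together with strict complementarity and the constraint qualification, genuinely produces a strict local maximizer rather than merely excluding second-order feasible ascent directions. Since the excerpt already states this fact in the lines preceding Definition~\ref{def:diffnash}, I would cite it directly rather than reprove it; the remaining steps are then essentially bookkeeping that parallels the unconstrained argument of Proposition~1 in~\cite{ratliff:2016aa}.
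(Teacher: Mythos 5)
Your proposal is correct and follows essentially the same route as the paper's proof: fix $x_{-i}$, reduce to each player's constrained optimization problem, and invoke the standard second-order/KKT sufficiency conditions (the paper cites Proposition~3.3.2 of~\cite{Bertsekas:1999fk} for exactly this step) before concluding player by player. Your version is in fact slightly more explicit about strict complementarity and the critical cone, and your closing remark that concavity upgrades the conclusion to a global Nash equilibrium is precisely how the paper's proof ends.
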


 The proof is straightforward and we leave it to Appendix~\ref{app:proofs}. The
 proposition says that the conditions of Definition~\ref{def:diffnash} are
 sufficient for a local Nash.
In contrast to single-agent optimization problems, for games, the second
order conditions do not imply the equilibrium is isolated~\cite{ratliff:2016aa}. A sufficient condition guaranteeing that a Nash equilibrium $x$ is isolated is
that the Jacobian of $\omega(x, \mu)$, denoted
$D\omega(x, \mu)$, is invertible~\cite{ratliff:2015aa}. 

We use (necessary and sufficient) optimality conditions on individual
player optimization problems holding other players' strategies fixed  to formulate the utility learning framework.

\section{Robust Utility Learning}
\label{sec:robutility}

In previous work, we have explored utility learning and incentive design as a
coupled problem both in theory~\cite{ratliff:2015aa} and in
practice~\cite{ratliff:2014aa, ratliff:2014ac,jin2015rest}.
In the present work, we re-examine the utility learning
problem using statistical methods that serve to improve estimation  and prediction
accuracy.

%
Looking forward, our aim is to 
fold the new estimation scheme into the overall incentive design framework. This
goal motivates why we are interested in learning more than a simple predictive
model for agents, but rather a utility-based forecasting framework that accounts
for individual preferences. 

We parameterize  $f_i$
by
$\theta_i=(\theta_{i1}, \ldots,
\theta_{im_i})\in \mb{R}^{m_i}$ and 
 a finite set of basis functions $\{\phi_{ij}(x_i, x_{-i})\}_{j=1}^{m_i}$ such
 that
 \begin{equation}
     f_i(x; \theta_i)=\la \phi_i(x_i,x_{-i}), \theta_i\ra+\bar{f}_i(x)
     \label{eq:newparams}
 \end{equation}
 where $\phi_i=[\phi_{i,1}\ \cdots \ \phi_{i,m_i}]^\top$ and $\bar{f}_i(x)$ is a
 function that captures \emph{a priori} knowledge of the agent's utility
 function (e.g., the incentive component designed by the planner).


\subsection{Base Utility Estimation Framework}
\label{sec:baseutility}
We start by describing the basic utility estimation framework
using equilibrium conditions for the game played between the players.  The
utility learning framework we propose is quite broad in that it encompasses a
wide class of 
 continuous games.
 In previous works~\cite{ratliff:2014ac,ratliff:2015aa,jin2015rest} we have shown that the 
utility learning problem can be formulated as a convex optimization problem by
using first-- and second--order
 conditions for Nash equilibria. Let us briefly review this formulation as it
 serves as the basis for the robust utility learning method.

Each observation $x^{(k)}$ is assumed to be an $\vep$--approximate differential Nash
equilibrium where the superscript notation $(\cdot)^{(k)}$ indicates the $k$--th observation.
For each observation $x^{(k)}$, it may be the case that only a subset of the
players, say $\mc{S}^k\subset \mc{I}$ at observation $k$,
participate in the game.
Then notationally
each observation is such that
\begin{equation}
\textstyle x^{(k)}=\textstyle \left( x_j^{(k)} \right)_{j\in \mc{S}^k}.
  \label{eq:xkdef}
\end{equation}
If player $i$ participates in $n_i$ instances of the game, then there are $n_i$
observations for that player. Let $n=\sum_{i=1}^\np n_i$ be the total number of
observations.

We can consider first--order optimality conditions for each player's
optimization problem and define a residual function capturing the degree of
suboptimality of $x_i^{(k)}$~\cite{ratliff:2014aa},\cite{keshavarz:2011aa}.
Indeed, for player $i$'s
optimization problem,  
let the residual of the stationarity condition be given by 
 \begin{align}
  \label{eq:dstatzero}
     \textstyle  r_{\text{s},i}^{(k)}(\theta_i, \mu_i) &= \textstyle D_if_i(x_i^{(k)},
     x_{-i}^{(k)})+\sum_{j=1}^{\ell_i}\mu_i^jD_ih_{i,j}(x_i^{(k)})
\end{align}
and the residual of the complementary conditions be given by
\begin{align}
  r_{\text{c},i}^{j,(k)}(\mu) &=  \mu_i^jh_{i,j}(x_i^{(k)}), \ j\in \{1,\ldots,
  \ell_i\}.
\label{eq:dslaczero}
\end{align} 
Define \begin{equation}
  r_{\text{c},i}^{(k)}(\mu_i)=[r_{\text{c},i}^{1,(k)}(\mu_i)\ \cdots\ 
r_{\text{c},i}^{\ell_i,(k)}(\mu_i)].
  \label{eq:rcj}
\end{equation}
Using data from the players' decisions (e.g.,~lighting votes from the social game
experiment which we describe in Section~\ref{sec:experiment}), the base utility
learning framework consists of solving the optimization problem given by
\begin{equation}
 \begin{aligned}
   &   \min\limits_{\mu,
   \theta}\textstyle\sum_{i=1}^\np\sum_{k=1}^{n_i}\chi_i(r_{\text{s},i}^{(k)}(\theta, \mu),
  r_{\text{c},i}^{(k)}(\mu))\\
  &  \text{s.t.}\ \ \theta_i \in \Theta_i, \mu_i\geq 0\ \ \forall \
  i\in\{1, \ldots, \np\}
\end{aligned}
\tag{P}\label{opt-P}
\end{equation}
where $\Theta_i$ is a constraint set on the parameters $\theta_i$ that captures prior information
about the objective, 
 $\chi:\mb{R}^\np\times \mb{R}^{\sum_{i=1}^{\np}\ell_i}\rar \mb{R}_+$ is a non-negative, convex penalty function
satisfying $\chi(z_1,z_2)=0$ if and only if $z_1=0$ and $z_2=0$, i.e.~any norm
on $\mb{R}^\np\times \mb{R}^{\sum_{i=1}^\np\ell_i}$, and the inequality $\mu_i\geq 0$ is element-wise.

The goal of this optimization problem---which
is a finite dimensional optimization problem in the $\theta_i$'s---is to find $\theta_i$ for each
player such that $(\hat{f}_i)_{i\in\mc{I}}$ is
consistent (or approximately consistent) with the data.
As is noted in~\cite{keshavarz:2011aa}, we also remark that it is important
that the sets $\Theta_i$ contain enough prior information about the objectives
$f_i$ in order to prevent trivial solutions. For example, if it is the case that
$\bar{f}_i(x^{(k)})=0$ for each $k$
and each $\Theta_i=\mb{R}^{m_i}$ then the trivial solution
$\theta_i=\boldsymbol{0}_{m_i}$ is feasible. For many applications some \emph{a
priori} knowledge on part of the utility functions of players may be encoded in
each 
$\Theta_i$  (e.g., choosing
$\Theta_i$ such that $\theta_{1i}=1$ or similarly selecting the incentive
component of the utility, a design possibility for the
planner~\cite{ratliff:2015aa}) or through other normalization techniques to prevent such trivial solutions.
In the context of the social game application
(in Section~\ref{sec:rusocialgame}), we
explicitly discuss how to construct this constraint set in such a way that we ensure 
the estimated utility functions are concave which in turn guarantees that there
exists a Nash equilibrium to the estimated game.

\subsection{Robust Utility Learning}
\label{sec:robust}
Let us now formulate
a robust version of the utility learning framework that allows us to reduce our forecasting error and learn the
noise structure which can be leveraged in extracting \emph{pseudo--coalitions}
between players which we describe in the sequel.

Define  
\begin{equation}
  X_i^{(k)} = \bmat{D_i h_{i}(x_i^{(k)})&D_i
  \phi_i(x^{(k)}))\\ \hat{h}_{i}(x_i^{(k)})& \mathbf{0}_{\ell_i\times m_i} },
 \label{eq:regressormatrix}
\end{equation}
where
\begin{equation}
  \hat{h}_i(x_i)=\text{diag}(h_{i,1}(x_i), \ldots, h_{i,\ell_i}(x_i)),
  \label{eq:H}
\end{equation}
\begin{equation}
  D_ih_i(x_i)=[D_ih_{i,1}(x_i)\ \cdots\ D_ih_{i,\ell_i}(x_i)],
  \label{eq:DH}
\end{equation}
and $n_d=(\ell_i+1)n$ is the total number of data points. 
The regressor matrix is then defined as $X = \text{diag}(X_1, \cdots,
X_\np)\in \mb{R}^{n_d\times (\ell_i+1)\np}$ where $X_i=[ (X_i^{(1)})^\top \ \cdots \
(X_i^{(n_i)})^\top]^\top$.
Define the regression coefficient  
\begin{equation}
\beta =
[\mu_1^1 \ \ldots\ \mu_1^{\ell_1} \ \theta_1\ \cdots\ \mu_\np^1 \ \ldots \
\mu_\np^{\ell_\np} \ \theta_\np]^{\top} \in
\mb{R}^{(\ell_i+1)\np}
  \label{eq:reg-coeff}
\end{equation}
 and
the
observation matrix $Y = [ Y_1\ \cdots\ Y_\np ]^{\top}\in \mb{R}^{(\ell_i+1)\np}$ where 
\begin{equation}
    Y_i = [\bar{f}_i(x^{(1)}) \ \mathbf{0}_{\ell_i}  \
    \cdots \  \bar{f}_i(x^{(n_i)}) \ \mathbf{0}_{\ell_i}  ]^{\top}.
  \label{eq:obs-matrix}
\end{equation}

Using the Euclidean norm for $\chi$ in
\eqref{opt-P} leads to an OLS problem with inequality constraints---i.e. a
constrained OLS (cOLS):
\begin{equation}
\min\limits_{\beta}\left\{ \| Y - X \beta \|_{2}\big| \ \beta \in \mc{B}\right\}
\tag{P1}\label{opt-P1}
\end{equation}
where $\mc{B}=\{\beta|\ \theta_i\in \Theta_i, \mu_i\geq 0, \ \forall i\in
    \mc{I}\}$. Enforcing that each of the constraint sets $\Theta_i$ is encoded
    by inequalities on $\theta_i$, 
the above stated problem can be viewed as a classical multiple
linear regression model with inequality constraints described by the data
generation process
\begin{equation}
    Y = X\beta + \epsilon, \ \beta\in \mc{B} 
  \label{eq:mulregression}
\end{equation}
 where  
$\epsilon=(\epsilon_1, \ldots, \epsilon_\np)$ is the error term satisfying: 
    (i) $E(\epsilon | X) = 0^{n_d \times 1}$;
  (ii) $\text{cov}(\epsilon |
X) = \sigma^{2}I^{n_d \times n_d}$;
(iii)  $\{\epsilon_{i}\}_{i=1}^\np$ independent and identically distributed
(i.i.d) with a zero mean and $\sigma^{2}$ variance.
In addition, we assume $\epsilon$ is nonspherical~\cite{freedman2009statistical}.
With this general statistical model we are able to describe a data generation
processes in which the error terms are correlated or lack constant variance.
This fact will be leveraged in creating coalitions between players  as we
describe in Section~\ref{sec:corutility}.

Mathematically the nonspherical errors are modelled by
\begin{equation}
  \text{cov}(\epsilon |
X) = G \succ 0, \ G \in \mathbb{R}^{n_d \times n_d}.
  \label{eq:nonsphere}
\end{equation}
One drawback of this technique is that, given nonspherical standard errors, the
cOLS estimator is biased---that is,
it does not satisfy the Best Linear Unbiased Estimator (BLUE) property, a result
of the Gauss--Markov theorem~\cite[Theorem~1, Chapter~5]{freedman2009statistical}.
However, we can derive an unbiased estimator by 
multiplying \eqref{eq:mulregression} on the left with $G^{-\frac{1}{2}}$. This
leads to the cGLS statistical model
given by 
\begin{equation}
(G^{-\frac{1}{2}}Y) = (G^{-\frac{1}{2}}X)\beta +  (G^{-\frac{1}{2}}\epsilon), 
\ \beta \in \mc{B} 
  \label{eq:gls}
\end{equation}
which now satisfies the BLUE property.
In general, the explicit form of
$\text{cov}(\epsilon | X) = G$ is unknown.  
We use the
residuals~\eqref{eq:mulregression} to infer the noise by imposing structural
constraints on $G$.

We remark that there are many types of noise structures that can be
    used for imposing structure on $G$. We provide two example noise structures
that could be used. 
    The first is block diagonal structure~\cite[Chapter
    5]{freedman2009statistical}; in particular, we impose that 
$G  =\text{blkdiag}({K}_1, \cdots, {K}_\np)\in \R^{n_d \times n_d}$
where
${K}_i =\text{blkdiag}(B_{i,1}, \ldots, B_{i,n_i}) \in \R^{(\ell_i+1)n_i \times
(\ell_i+1)n_i}$
with each $B_{i,k}\in \mb{R}^{(\ell_i+1)\times (\ell_i+1)}$.
Estimating $\beta$ with cOLS, we
get $\hat{\beta}_{\text{cOLS}}$ with residual vector
$e=Y-X\hat{\beta}_{\text{cOLS}} \in \mb{R}^{(\ell_i+1) n}$. 
The residual vector $e$ can
be decomposed into residuals for each player by writing $e=[e_1^\top\ \cdots \
e_\np^{\top}]^\top$. We use $e_i$ to compute an estimate $\hat{K}_i$ of
${K}_i$ which is, in turn, used to compute $\hat{G}$.
 The residuals come in
triplets since for each $k$, $Y_i^{(k)}\in \mb{R}^{\ell_i+1}$. For ease of
presentation and comprehension, we will use a paired index for the residuals
instead of a single index. 
For example, for player $i$, there are $n_i$ instances at which we have $\ell_i$ observations. Let
$(e_i)_{k,j}=(e_i)_{(\ell_i+1)(k-1)+j}$ where $k\in\{1,\ldots, n_i\}$ and $j\in
\{1,\ldots,(\ell_i+1)\}$. With the residuals, we can then form estimates
$\hat{B}_{i,k}\in \R^{(\ell_i+1) \times (\ell_i+1)}$ of
$B_{i,k}$ where $\hat{B}_{i,k}$ takes the form 
\begin{equation}\label{eq:Freedmannoise}
    \hat{B}_{i,k}=[(\hat{B}_{i,k})_{l,j})]_{l,j=1}^{\ell_i+1}
\end{equation}
with 
$(\hat{B}_{i,k})_{j,j}=n_i^{-1}\sum_{t=1}^{n_i} e_{t,j}^2$ and
$(\hat{B}_{i,k})_{l, j}=n_i^{-1}\sum_{t=1}^{n_i} e_{t,j}e_{t,l}$ for
$j\neq l$.
We provide this noise structure as an example because in our formulation we
allow for constraints on the players' optimization problems so that for each iteration $k$, we
in fact have multidimensional observations as can be seen in
\eqref{eq:regressormatrix}. 

The second noise structure we consider is adapted from the
$\text{HC}_{4}$
estimator~\cite{cribari2004asymptotic} and is given by 
\begin{equation}\label{eq:HC4noise}
\textstyle\hat{G}= \text{diag}
\left(\frac{e_{1}^{2}}{(1 - b_{1})^{\delta_{1}}},\frac{e_{2}^{2}}{(1 -
b_{2})^{\delta_{2}}}, \cdots, \frac{e_{n_d}^{2}}{(1 -
b_{n_d})^{\delta_{n_d}}}\right)
\end{equation} where 
$\delta_{i} = \text{min} \left\{4,
  n_db_{i}/(\sum_{i=1}^{n_d}b_{i})\right\}$
 and the $b_{i}$'s are the diagonal elements of $B=X(X^{\top}X)^{-1}X^{\top}$. 
With this structure, the penalty for each residual increases with
$b_{i}/\sum_{j=1}^{n_{d}}b_{j}$. 
As with the previous noise structure, 
we use the fitted cOLS estimator $\hat{\beta}_{\text{cOLS}}$
and residuals to get an initial $\hat{G}$. 
We selected to present this noise structure because it is computationally
efficient compared to many other noise structures.

In both cases, we substitute the inferred noise,
$\hat{G}$, into the cGLS statistical model \eqref{eq:gls} to get the one--step
constrained Feasible GLS (cFGLS) 
estimators. We iterate between the estimation of $\hat{G}$ and $\beta_{\cFGLS}$ either until convergence or
for a fixed number of iterations to prevent overfitting. To resolve this trade-off and find the optimal iteration size we adopt a simple cross validation method.

\subsection{Boosting with Ensemble Methods}
In this subsection, we describe several ensemble methods. Combined
    with a bootstrapping process, ensemble methods not only boost the size of
    what can often be a small data set in practice but also 
allow us to improve the estimator performance and explore the bias--variance
tradeoff.

\subsubsection{Bootstrapping and Bagging}

Bootstrapping is a technique for asymptotic approximation of the bias
and standard error of an estimator in a complex and noisy statistical
model~\cite{freedman2009statistical},\cite{Hastie09}. 
We employ \emph{wild bootstrapping} to generate a \emph{pseudo-data set} from
which we generate several weak estimators that we then combine using
 \emph{bagging}.
While we assume that $E(Y | X) = X\beta$, we also allow for
heteroskedasticity by conditioning on the residual transformations that we
imposed in the noise structure.  
Wild bootstrapping is 
a technique of parametric bootstrapping that is consistent with
heteroskedastic inference and cFGLS data generation.
 
The bootstrapping process can be described in two steps: First,
we fit our cFGLS model which gives us $\hat{\beta}_{\text{cFGLS}}$. Then, 
 generate $N$ replicates of \emph{pseudo--data} 
 using the data generation process
 \begin{equation}\label{eq:wildbootstrap}
   \tilde{Y} = X\hat{\beta}_{\cFGLS} + \Phi(e) \vep,
\end{equation}
where $\tilde{Y} \in \R^{n_d}$ is the new observation vector
(pseudo-observations), 
$\hat{\beta}_{\cFGLS} \in \R^{n_d}$ is the cFGLS estimator,
$\vep \sim N(0,I^{n_d \times
n_d})$, $e \in \R^{n_d}$ is the residual vector given by $e = \tilde{Y} -
X\hat{\beta}_{\cFGLS}$  and $\Phi:\R^{n_d}\rar \R^{n_d}$ is a
nonlinear transformation such that $\Phi(e)=\hat{G}^{\frac{1}{2}}\in \R^{n_d \times n_d}$.
Since $E
(\Phi(e) \vep | X) = \Phi(e)E(\vep | X) =
\Phi(e)E(\vep) = \mathbf{0}_{n_d \times n_d}$, using the data generation process in
\eqref{eq:wildbootstrap}, we resample from i.i.d variables. 

Bagging in regression models and trees is a
technique for reducing the overall variance~\cite{Hastie09}.  
Using the $N$ replicates of pseudo--data generated by wild bootstrapping, we train
$N$ different models. We combine the resulting
bootstrapped estimators by averaging:
\begin{equation}\label{eq:baggedlearner}
  \textstyle \hat{\beta}_{\text{bag}} = \frac{1}{N} \sum_{j=1}^{N}
  \hat{\beta}_{\cFGLS, j}
\end{equation}
where 
$\hat{\beta}_{\cFGLS,j}$ is the estimator using the $j$--th pseudo--data sample. Bagging works efficiently with high variance
models and does not hurt the overall performance of the statistical model. We
refer to the bagged estimates as bagged mega-learners since they combine several weak learners/estimators. Using wild bootstrapping, the empirical covariance matrix of
    $\hat{\beta}$ is an
asymptotic approximation of the covariance matrix and 
is given by
 \begin{equation}\label{eq:wildbootstrap}
   \textstyle   \hat{C}_{\beta}=\frac{1}{N}
   \sum_{j=1}^{N} \left( \hat{\beta}_{\text{cFGLS},j} -
   \hat{\beta}_{\text{bag}} \right) \left( \hat{\beta}_{\text{cFGLS},j}-
\hat{\beta}_{\text{bag}}
\right)^{\top}.
\end{equation}
Asymptotic estimation of the empirical covariance matrix reveals hidden
structures between players and is what we leverage in the correlation utility
learning procedures.

\subsubsection{Bootstrapping and Bumping}

In a similar fashion as the bagging ensemble method, we combine bumping---a method for fitting cFGLS estimators by using a random
search over the model space~\cite{tibshirani:1999aa}---with the wild bootstrapping generated pseudo-data.
In particular, we apply a \textit{stochastic search} over several different
statistical models coming from a similar data process---i.e.~the data process in \eqref{eq:wildbootstrap}.


 We add the original training data sample to the $N$ replicates of pseudo-data
 generated by the wild bootstrapping
 process and we use this data to estimate $N+1$ cFGLS estimators.
 We evaluate these estimators on the training set and select the one with the
 least training error. The cFGLS bumping estimator is given by
\begin{equation}\label{eq:bumpinglearner}
  \hat{\beta}_{\text{bump}} = \argmin\limits_{\hat{\beta}_{\cFGLS,j}} \| \tilde{Y} - X
  \hat{\beta}_{\cFGLS,j} \|^{2}_{2}
\end{equation}
where $\hat{\beta}_{\text{cFGLS},j}$'s are the cFGLS estimators from
derived from the bootstrapped data. 

\subsubsection{Gradient Boosting}
We combine $L_2$--gradient boosting---which is a repeated least squares fitting
of residuals \cite{friedman2001greedy}---with cFGLS. Gradient boosting is a boosting technique that uses an $L_{2}$ loss function
combined with a gradient descent update method for combining weak learners
at each iteration. 
Boosting estimators are trained in sequence using a
weighted version of the original data set. 
In general, boosting methods are extremely useful for combining models by
incrementally training each new model by emphasizing the errors of the previous
training instances. They are
used extensively in classification methods such as logistic
regression and support vector machines. 

Repeated residual fitting is applied until we reach iteration $m_{\text{stop}}$,
a stopping criteria selected using Akaike
Information Criterion (AIC) to avoid overfitting~\cite{hurvich1998smoothing}.
.
The procedure is detailed in Algorithm~$1$.

\begin{algorithm}[h]
  \label{alg:CFGLSadaboost}
\centering
\caption{$L_2$--gradient boosting with cFGLS} 
\begin{algorithmic}[1]
\Function
{cFGLSgradboost}{$X$,$Y$,$\nu$}
\State $\hat{H} \gets X(X^{\top}X)^{-1}X^{\top}$\Comment{compute $\hat{H}$ matrix}
\State $\nu \gets s \in (0,1]$\Comment{set shrinkage (updating) parameter}
\State $m_{\text{stop}} \gets 1$\Comment{iteration number}
\State choose $M_{\max}$ \Comment{upper iterations bound}
\State $\AIC_{\text{list}}\gets$[ ]\Comment{create empty list}
\State \underline{\textbf{Compute stopping iteration time $m_{\text{stop}}$:}} 
\While {$m_{\text{stop}} < M_{\max} $}
\State $R_{\ms} \gets (I_{n_d\times n_d} - \nu \hat{H})^{\ms}$
\State $B_{\ms} \gets (I_{n_d\times n_d} - R_{\ms})$
\State $\sigma_{\ms}^{2} \gets n_d^{-1}\sum_{i=1}^{n_d}(Y_{i} - (B_{\ms}Y)_{i})^{2}$
\State $AIC_{\ms} \gets \left(\log{\sigma_{\ms}^{2}} + \frac{1 +
(\Tr(B_{\ms}))/n_d}{1 - (\Tr(B_{\ms}) + 2)/n_d}\right)$
\State $AIC_{\text{list}}$.append$(AIC_{\ms})$
\State $\ms \gets \ms+1$
\EndWhile
\State $\hat{M} \gets \arg\min AIC_{\text{total}}$\Comment{find minimum point}
\State $\hat{\beta}_{\cFGLS}\gets $ estimate of $\beta_{\text{cFGLS}}$ using
cFLGS
\State $e_{\FGLS} \gets Y - X\hat{\beta}_{\cFGLS}$\Comment{residuals estimation}
\State $e \gets e_{\cFGLS}$\Comment{initialize residuals}
\State $k \gets 1$\Comment{iteration index}
\State $\beta_{\text{boost}} \gets \hat{\beta}_{\cFGLS}$\Comment{initialize cFGLS
boosted learner}
\State \underline{\textbf{Compute boosted learner $\beta_{\text{boost}}$:}} 
\While {$k < \hat{M}  $} 
\State $\beta_{i} \gets (X^{\top}X)^{-1}X^{\top}e$\Comment{residuals fitting}
\State $\hat{\beta}_{\text{boost}} \gets \hat{\beta}_{\text{boost}} + \nu \beta_{i}$\Comment{update formula}
\State $e \gets Y - X\hat{\beta}_{\text{boost}}$\Comment{residuals update}
\State $k \gets k+1$ 
\EndWhile
\EndFunction
\end{algorithmic}
\end{algorithm}


\section{Application to Bertrand-Nash Competition}
\label{sec:bertnash}
Let us illustrate the framework and its performance of the robust utility
learning framework before moving on by applying it to estimate market demand
functions under Bertrand-Nash equilibrium (see,
e.g.,~\cite{prakash:2009aa,berry:1995aa, bertsimas:2015aa}). The toy model can be thought of as an abstraction
of Bertrand-price setting for commodities such as oil, gas, and
coal~\cite{gabriel:2013aa,ledvina:2011aa}.


Consider two firms competing to sell their product by setting the price $p_1$
and $p_2$ for firm $1$ and $2$, respectively. The firms utility functions are
their revenue, i.e.
    $f_i(p_1,p_2)=p_iD_i(p_1,p_2,\xi)$ where $D_i$ is the demand function for
    firm $i$ and $\xi\sim \mc{N}(1.5, 0.5)$ is a random variable that
    captures the fact that demand is dependent on economic indicators in
    addition to the prices set by the firms. In this stylized example, we
    consider linear demand functions given by
    \begin{equation}
        D_i(p_1,p_2,
        \xi)=\theta_{i,1}+\theta_{i,2}p_1+\theta_{i,3}p_2+\nu\xi
        \label{eq:paramdemand}
    \end{equation}
    where $\theta_i=(\theta_{i,j})_{j=1}^3$ are unknown parameters to be
    estimated and $\nu=1.5$ is a known parameter. The prices are
    constrained to be in the interval $[0,\bar{p}]$ where $\bar{p}\in \mb{R}_+$
    is the upper bound. 
We let $\theta_1=(-1.0,0.5,-1)$ and $\theta_2=(0.3,-1,0.3)$ be the ground truth
values for the parameters we wish to estimate. Thus,
$\bar{f}_i(p_1,p_2)=\nu\xi$ and examining the marginal revenue functions
$D_if_i(p_1,p_2)$ we have that
$\phi_1(p_1,p_2)=[1\ 2p_1\ p_2]^\top$, and $\phi_2=[1\ p_1\ 2p_2]^\top$. 

In order to generate the data set we add a noise term $\vep\sim\mc{N}(0,0.5)$ to
the marginal revenue functions, i.e.~$D_if_i(p_1,p_2)+\vep$, and solve for the
Bertrand-Nash equilibrium. We simulate the game between the firms $600$ times. In the
robust utility learning framework, for this example, we employ the
HC\textsubscript{4} noise structure and compute the cOLS, cFGLS, bagging,
boosting and bumping estimators. We use a $10$--fold cross validation proceedure 
to prevent over-fitting. Table~\ref{tab:bertnash} contains error using two
metrics for both firms. Figure~\ref{fig:bertnash} shows the forecast for part of
the testing set using cOLS
and each of
the ensemble methods as compared to the ground truth. While bagging
performed best for firm $1$ and boosting for firm 2 in the particular
instantiation of this
toy example, the performance more generally is dependent on the noise structure in the demand and
marginal revenue functions, the sample size, and the dynamics between the two
firms.  
However, it is interesting to point out that as we
increase the variance on $\xi$, each of the ensemble methods performance stay
relatively the same yet the cOLS error increases significantly.


{ \begin{table}[]
\centering
\caption{\scriptsize Mean Square Error (MSE) 
 of forecasting
using the proposed robust utility learning methods vs cOLS estimators for
Bertrand-Nash competition. The best performing method is
indicated in bold text for each of the firms.}
 \label{tab:bertnash}
\begin{tabular}{|l|c|c|c|c|}
\hline
{\small \emph{\textbf{Firm 1}}} & \textbf{bagging} & boosting & bumping & cOLS  \\ \hline
\textit{ MSE }&\textbf{ 0.05 }& 0.51 & 0.65 & 1.62  \\ \hline \hline
{\small \emph{\textbf{Firm 2}}} & bagging & \textbf{boosting} & bumping & cOLS  \\ \hline
 \textit{ MSE }& 1.58 & \textbf{0.71} & 0.89  & 2.54 \\ \hline
\end{tabular}
\end{table}

\begin{figure}[h]
    \centering
    \includegraphics[width=\columnwidth]{./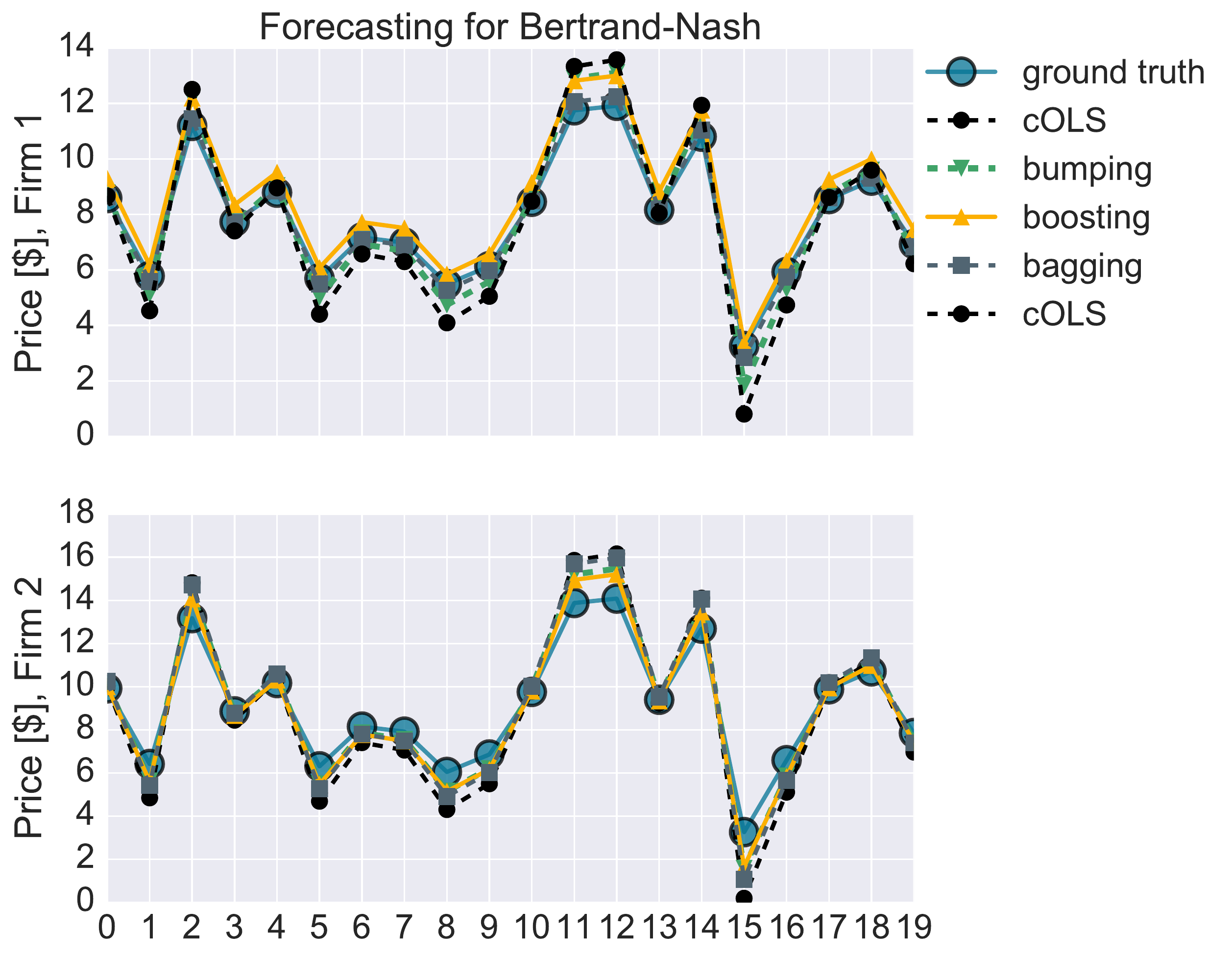}
    \caption{Forcast for Firms 1 \& 2 using cOLS and each of the
    ensemble methods. The ground truth prices 
  are depicted by the \textbf{\textcolor{blue!50!green!85!black}{blue dots}}; the
  cOLS forecasts are depicted in \textbf{black}, the bagging forecasts are
  depicted in \textbf{\textcolor{gray!85!blue!85!black}{gray}}, the 
bumping forecasts are depicted in
  \textbf{\textcolor{green!70!black}{green}}, and the
 boosting forecasts are depicted in
  \textbf{\textcolor{yellow!80!orange!90!black}{gold}}.}
    \label{fig:bertnash}
\end{figure}

\section{Correlated Utility Learning}
\label{sec:corutility}
In this section, we describe how learned correlations between players can be
leveraged to \emph{boost} estimator performance. We add a second step to the
estimation procedure in which we craft a new game where {players'} utilities 
are composed of their original estimated utility plus
some combination of other players' utilities weighted by the estimated
correlation between players.

When the correlations between \players are positive, we are creating what we refer to as \emph{pseudo-coalitions} since
players are not \emph{explicitly} agreeing to collude in the game but rather are doing
so \emph{implicitly}. The degree of \emph{coalition} is discovered by the robust
utility learning process through estimating the empirical
covariance $\hat{C}_{\beta}$, i.e.~asymptotic approximation of the covariance matrix---of
$\hat{\beta}_{\text{est}}$ where we use the notation
$\hat{\beta}_{\text{est}}$ to abstractly denote the estimator derived from whichever
of the methods described in the previous section is employed. 
On the other hand, when the correlations between \players are negative, by
combining their utilities we aim to take advantage of active players' richer data
sets in predicting the behavior of \players with less variation and frequency in
their observed actions.

 We
refer to the learned utility---$\hat{f}_{i}$ for player $i$---from the robust utility learning framework as the
\emph{base utility} and it  
is given by
\begin{equation}
    \hat{f}_{i}(x_i, x_{-i};\hat{\theta}_i)=\bar{f}_i(x_i,x_{-i})+\la\phi_i(x_i, x_{-i}), \hat{\theta}_{i}\ra
  \label{eq:nominal}
\end{equation}
where $\hat{\theta}_i$ is extracted
from 
$\hat{\beta}_{\text{est},i}$.

Using the correlations we learn when we estimate $\hat{f}_{i}$, we
construct a new utility $\hat{g}_{i}$ by combining scaled versions of a subset
(potentially all) of
the other
agents' utilities that are correlated with agent $i$. We formulate an
optimization problem to deterimine the scaling coefficients. 
The correlated utility $\hat{g}_i$ for player $i$ is given by 
\begin{align}
\textstyle    \hat{g}_{i}(x_i,x_{-i}) &=\textstyle \sum_{j\in \mc{K}_i}
z_{i,j}\sigma_{i,j}\bar{f}_i(x_i,x_{-i})\notag\\
    &\quad+\sigma_{i,j}\la\phi_i(x_i,x_{-i}),\hat{\theta}_j\ra
    \label{eq:corest-1}
\end{align}
where $\mc{K}_i\subset\mc{I}_i$ a subset of the players correlated with player
$i$, $\sigma_{i,i}$ is the estimated variance of player $i$  determined by the empirical
covariance matrix, $\sigma_{i,j}$ is the covariance between the parameter
estimates for player $i$ and $j$
also determined by the empirical covariance matrix, and $z_{i,j}$ are scaling
constants to be optimized.
We refer to the resulting game as an \emph{approximated correlation game}\footnote{We remark that there exists an equilibrium concept called \emph{correlated
equilibrium}~\cite{aumann:1974aa} which generalizes a Nash
equilibrium by characterizing correlations between randomized strategies; we
mention this only to alleviate  any potential confusion. 
The equilibrium concept we utilize for the
approximated correlation game is still a pure Nash equilibrium and there is no
coordinating mechanism.}.

{Given the form of $\hat{g}_i$, our goal is to select the scaling
constants $z_{i,j}$ in order to
reduce the forecasting error.} Analogous to the base utility learning framework
presented in Section~\ref{sec:baseutility}, using our training data, we
formulate a convex optimization problem using optimality
conditions on each player's individual optimization problem where we assume that
player $i$ is optimizing $\hat{g}_i$ with respect to its own choice variable
$x_i$.   
In particular, we
solve a convex optimization problem formulated as follows. Define the vector $z_i\in \mb{R}^{|\mc{K}_i|}$ by
  $z_i=( {z_{i,j}})_{j\in \mc{K}_i}$
  and let $z=(z_i)_{i\in \mc{I}}$.
For player $i$'s
optimization problem $\max\{\hat{g}_i(x_i,x_{-i})|\ x_i\in
\mc{C}_i\}$, let the residual of the stationarity
condition be given by 
 \begin{align}
  \label{eq:dstatzero-1}
  \textstyle  r_{\text{s},i}^{(k)}(z_i, \mu_i;\hat{\theta}) &= \textstyle D_i\hat{g}_i(x_i^{(k)},
  x_{-i}^{(k)})+\sum_{j=1}^{\ell_i}\mu_i^{j}D_ih_{i,j}(x_i^{(k)})
\end{align}
and the residual of the complementary conditions be given by
\begin{align}
  r_{\text{c},i}^{j,(k)}(\mu_i) &=  \mu_i^{j}h_{i,j}(x_i^{(k)}), \ j\in
  \{1,\ldots, \ell_i\}.
\label{eq:dslaczero-1}
\end{align} 
As before, let
$r_{\text{c},i}^{(k)}(\mu_i)=[r_{\text{c},i}^{1,(k)}(\mu_i)\ \cdots\ 
r_{\text{c},i}^{\ell_i,(k)}(\mu_i)]$.
Define $Q_i\in \mb{R}^{n_i\times|\mc{K}_i|}$ 
by
\begin{align}
    \textstyle  Q_i=\left[ \sigma_{i,j} D_{i,i}^2\bar{f}_i(x^{(k)}) \right]_{k=1,j\in\mc{K}_i}^{n_i}.
  \label{eq:meth1Q}
\end{align}
and $q_i\in \mb{R}^{n_i}$ by
\begin{equation}
 \textstyle q_i=\left[ \sum_{j\in \mc{K}_i}\sigma_{i,j}\la
      D_{i,i}^2\phi_i(x^{(k)}), \hat{\theta}_j\ra
  \right]_{k=1}^{n_i}.
  \label{eq:vector1}
\end{equation}
Then,  we have the following convex optimization problem {to determine
    the scaling factors $z_{i,j}$}:
\begin{equation}
 \begin{aligned}
   &   \min\limits_{z,\mu}\textstyle\sum_{i=1}^p
   \textstyle\sum_{k=1}^{n_i}\chi_i(r_{\text{s},i}^{(k)}(z_i, \mu_i; \hat{\theta}),
  r_{\text{c},i}^{(k)}(\mu_i))\\
  &  \text{s.t.}\ \ Q_iz_i+q_i\leq 0,  \ \mu_i\geq 0\  \ \forall \ i\in \mc{I}
\end{aligned}
\tag{P'}\label{opt-Pprime}
\end{equation}
Solving \ref{opt-Pprime} gives us estimated correlated utilities $\hat{g}_i$ for
each $i\in \mc{I}$ that we then use to forecast the players' decisions.

\section{Application to Smart Building Social Game}
\label{sec:apptosoc}
We now specialize the robust and correlated utility learning frameworks to the
smart building social game
.
\begin{figure*}[ht]
\center   
\subfloat[\label{fig:points}]{\includegraphics[width=0.6\textwidth]{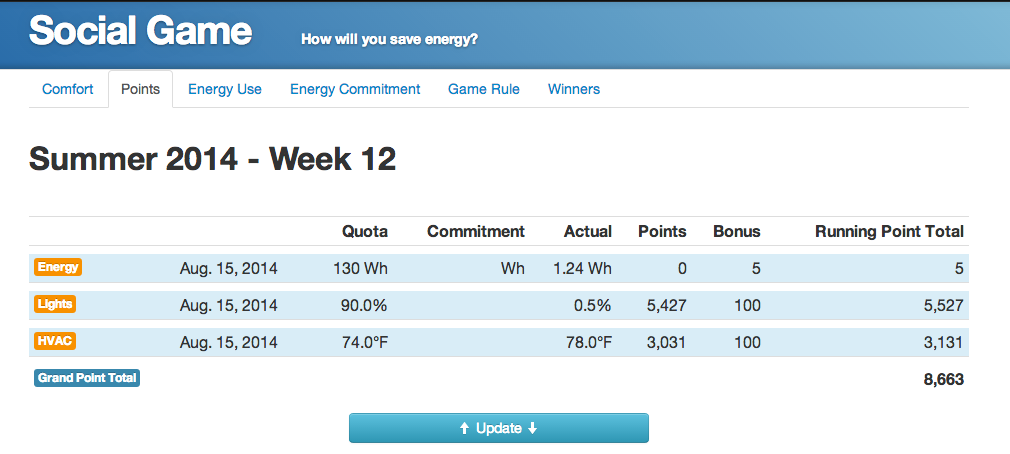}}
\subfloat[\label{fig:votes}]{\includegraphics[width=0.15\textwidth]{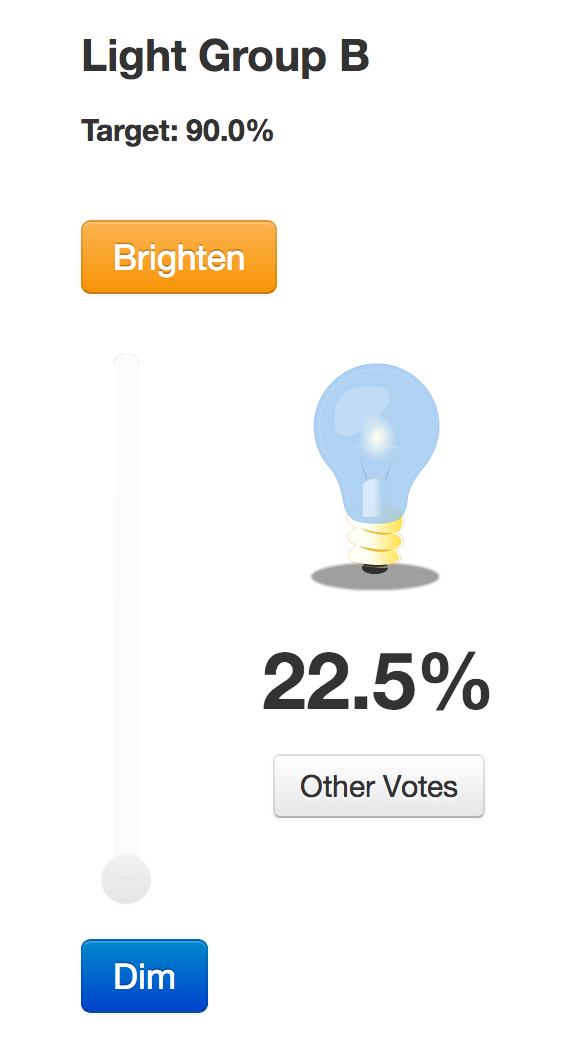}}

  \caption{Graphical user interface (GUI) for energy based social game: (a)
  Display, in table form, of points and votes for energy consumption, HVAC, and lights. (b)
  Display of the GUI for
  logging lighting setting preferences. }
  \label{fig:all}
\end{figure*}

\begin{figure}[ht]
\center\subfloat[\label{fig:zones1}]{\includegraphics[width=0.3\textwidth]{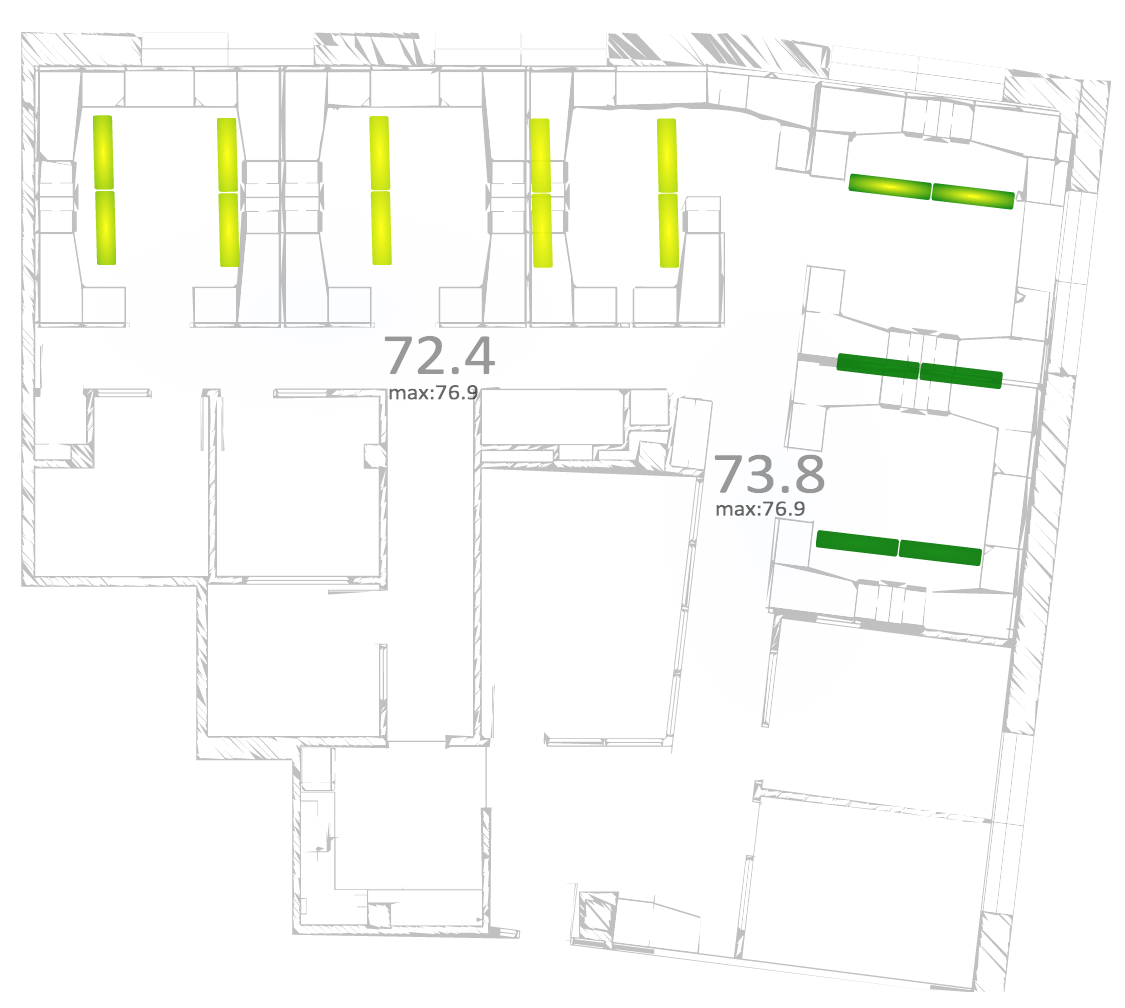}}
      
\subfloat[\label{fig:zones2}]{\includegraphics[width=0.3\textwidth]{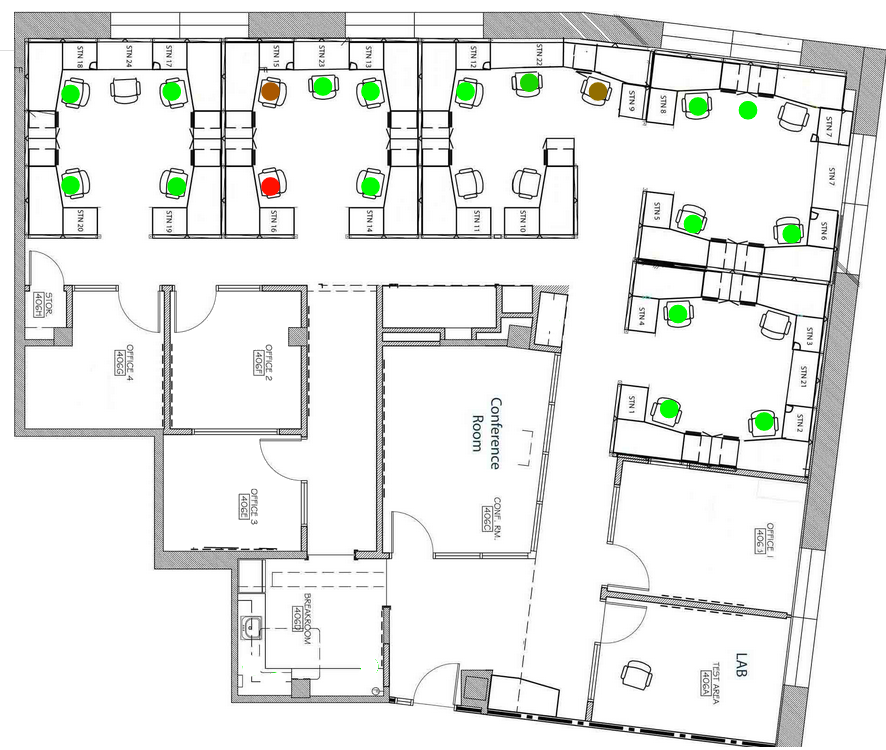}}
         \caption{
Occupants can access a variety of information when they log into the social
  game portal, including various displays of energy consumption by other
  participants in the game:  (a) Display
  of current light level and temperature in the collaboratory space; energy
  efficiency of the lights is coded by color where light green
  indicates \emph{higher energy efficiency}. (b) Display of collaboratory floor
  plan with dots indicating where present and participating players sit. Players
  not in the office are excluded from the game. The color of the
  dot indicates the level of energy efficiency of the player as compare to the
  other participants; green indicates higher efficiency while red indicates
  lower efficiency.}
  \label{fig:all-1}
\end{figure}

\subsection{Social Game Experimental Set-Up}
\label{sec:experiment}
Our experimental setup is in a collaboratory space---an open, shared work space with
cubicles---within the CREST center on the UC Berkeley campus. We crafted a social game
such that occupants in this collaboratory
freely vote according to their usage preferences of shared resources and are
rewarded with points based on how {energy efficient} their strategy is in
comparison with the other occupants.   We employ a lottery mechanism consisting
of three Amazon gift cards executed bi-weekly to reward occupants; occupants with more points are
more likely to win the lottery. 

The office is divided into five lighting zones and 
two heating, ventilating, and air conditioning (HVAC) zones. In this space,
there is a total $20$ occupants who are eligible to participate in
the social game. 
If the occupants are not present in the office, they are
excluded from the game at that time instant. When they arrive at the office, they can
rejoin the game. To enforce the rule that those who are not
present in the space cannot vote remotely, we executed a simple presence
detection algorithm based on their power usage~\cite{jin2014presencesense,jin2017virtual}.

We have installed a
Lutron\footnote{\tt http://www.lutron.com/en-US/Pages/default.aspx} system for
precise control of the lighting setting (dim level of the lights) in the office 
as well as  desk--level energy monitoring devices
(i.e.~ACME wireless sensors~\cite{jiang2009design}) to meter the energy usage of
each occupant. In addition, we have modified the HVAC system so that it can be
precisely controlled. We have verified prior to our experiment that implemented
control of these systems results in expected performance. 

We have developed a platform to interface with the occupants as well as manage
and process collected data. The platform includes a  web portal and mobile
app that the occupants may use to participate in the game. 
It also allows for occupants to visualize different aspects of
the social game---e.g., the lighting setting and the energy
efficiency level of different occupants or the entire building---as well as view the point level and
historical voting record
of other occupants among many other statistics. 
Figure~\ref{fig:all} shows the user interface for
viewing points and logging votes. Figure~\ref{fig:zones1} shows a visualization
of the current light level using a green--to--red color scale with green being
more energy efficient. The current temperature is also displayed. Figure \ref{fig:zones2}
shows a visualization of each present and participating occupant's energy
efficiency level.

In this paper, we report on a social game experiment conducted based only the
lighting shared resource\footnote{We remark that while our experimental platform
    is capable of conducting a social game that includes lights, HVAC, and personal
energy consumption, we only report on an experiment that focuses on lighting
in order to isolate combined effects from these different resources. In
on-going experiments, we are examining all aspects jointly.}.
Prior to the start of the social game experiment, the lighting setting was $90$\% of the
maximum possible lighting setting.
At the start of the social game experiment, we set a default lighting setting
which acts as the suggested lighting setting and is the dim level setting in the
office if, e.g., no occupants are participating
in the game.
Throughout the game, we adjust the default lighting setting as well as the
points. The lottery mechanism coupled with the points we
distribute compose the \emph{incentive component} of the feedback to the
participants while the default
lighting level is the \emph{physical control component} of the feedback.
 These two mechanisms act as our control inputs and our feedback
mechanism to the participants. We seek to design them by taking into consideration the
preferences of the participants. In this way, these mechanisms close the loop around the
participant and with our proposed utility learning scheme, these mechanisms can be
modified to encourage more energy efficient resource consumption.

The game is designed to leverage interactions amongst
occupants, who win points based on how energy efficient their lighting vote is compared
to others. 
  An occupant's vote is for the lighting setting in their zone as well as for
neighboring zones.   The occupants select their desired lighting setting in the
continuous interval $[0,100]$ where each value represents the percentage of the
maximum lighting setting possible in the space. 
The occupants can vote as frequently as they like and the average of all the
occupants' current votes sets the implemented lighting setting  in the collaboratory.
An occupant can leave the
lighting setting as the default level after logging in or they can change it 
 depending on their preferences and other 
environmental factors that may affect their choice. 

The experimental trials reported on in this paper were conducted over
    the period of $285$ days\footnote{The period of the experiment was $2014/3/3$--$2014/12/14$.}.
Experiments with $4$ different default levels,
$\{10\%,20\%,60\%,90\%\}$, were conducted, covering a spectrum of lighting
conditions. 
Since occupants were allowed to vote whenever they chose, their response rate
per day varies. The data set we collected consists of occupant
votes (meaning the lighting level they select) over the period of investigation
as well as the points that were distributed to each occupant.
We collected 6,885 votes over the period of the experiment.

\subsection{Brief Background}
\label{sec:litreview}
In order to place the work pertaining to building energy efficiency in the context of the state
of the art, we briefly overview existing approaches. 


Recognizing that HVAC systems are responsible for a large portion of building
energy consumption, many control theoretic approaches such as \cite{Aswani:2012kx,jin2015sensing} derive model predictive and distributed control polices for HVAC
systems. 
While these control
theoretic approaches make efforts to account for the presence of occupants, they
tend to ignore occupant
behaviors and, more importantly, their heterogeneous preferences.

There are other works that make strides towards incorporating behavioral models
of occupants; e.g., the authors of \cite{boman:1998aa} employ a
multi-agent systems approach to develop a framework for incorporating occupant
comfort
preferences and the authors of \cite{bourgeois:2006aa} develop behavioral models
for lighting usage.
In a more active approach, the authors of~\cite{song:2013aa} 
develop a \emph{collaborative setting
definition paradigm} in which occupants and facilities managers submit
preferences and requirements and a rule engine tries to resolve them in order to
create a universal control policy.
While occupants' preferences are taken as inputs to the building control design,
it is not clear that it is possible to satisfy all the occupants' comfort
preferences simultaneously with those of the facilities manager; hence, the
misalignment between preferences and incentives remains.

%

In our approach, on the other hand, we leverage a social game that creates
(friendly) competition between users and employs incentives to resolve conflicting
preferences by compensating users. Within the energy application domain,
gamification has been largely used
for 
education
or awareness~(see, 
e.g.,~\cite{bang:2006aa,simon:2012aa}).
There are works that are closely related to ours in the sense that they also
recognize that occupants are self-interested participants in smart buildings
and try to account for their strategic behavior. For example, in~\cite{li:2014aa}, the authors
develop an interesting scheme for engaging occupants directly in DR.
Analogous to our approach, occupants are modeled as utility maximizers in a game
theoretic context where they 
are
incentivized to curtail their consumption in response to an event. Our approach
differs in that we focus on shared resources such as lighting and HVAC instead
of  personal devices (e.g., desk appliances).
Furthermore, it is assumed in~\cite{li:2014aa} that the type space (i.e.~their preferences) of the users
is a known finite set of two possible values.
We do not assume the facility manager knows the utility function or
the type  of
the users and we propose an algorithm for learning this utility function from
observations of decisions.


While incorporating occupant preferences into building automation is
not novel in and of itself, we propose an innovative algorithm for learning occupant
preferences in competitive environments and, moreover, learn how their actions
are correlated. Such correlations can be leveraged in improving incentive
mechanisms to shape users' preferences thereby providing more flexibility.
 Our method is applied to real-world data from
experimental trials we conducted as opposed to simulations as is the case with
many existing works. 
Furthermore, it is agnostic to the application and could
be applied in general to other scenarios 
in which users are competing for constrained but shared resources. 
For example, the utility
learning method can be easily adapted to learning preferences of individual
buildings interacting with an aggregator or learning preferences of 
drivers seeking on-street parking \cite{jin2017mod}. In each of these cases, there exists a
planner---the aggregator or department of transportation---tasked with managing
a  resource being consumed by self-interested users.


\subsection{Occupant Decision-Making Model}
\label{sec:rusocialgame}

Each agent's vote $x_i$ is constrained to be in the interval $[0,100]\subset
\mb{R}$. Let $\bar{x}$ denote the average of the lighting votes and the setting
that is implement---e.g., at observation instance indexed by $k$,
$\bar{x}^{(k)}=\frac{1}{|\mc{S}^{k}|}\sum_{j\in \mc{S}^k} x_j^{(k)}$.  
We model each agent's utility as being composed of two basis functions that
capture the tradeoff between desired lighting (satisfaction) and desire to
win. The lighting satisfaction an occupant feels may be a function of several
factors including their productivity (ability to perform their job) as well as
physical comfort.
We abstractly
model their desired lighting level using a Taguchi loss function,
  $\psi_i(x_i,x_{-i})=-\left( \bar{x}-x_i \right)^2$,
which is interpreted as
modeling occupant dissatisfaction in such a way that it is increasing as variation increases from
their reported desired lighting setting (their vote)~\cite{taguchi:1989aa}. 

We acknowledge that an agent may have some internal desired lighting level
that is different than its vote; e.g., the agent may realize that 
voting an extreme value pushes the average toward a more
desirable setting. This type of \emph{gaming} results in \emph{moral hazard}
type issues which can be addressed in the incentive design
step~\cite{bolton:2005aa,laffont:2002aa}. Thus,
we set this type of {gaming} aside for the time being, and focus instead on
the unknown preferences---a different kind of asymmetric information that leads
to \emph{adverse selection}---between \emph{lighting} and \emph{winning}. 

Points are distributed by the planner using the relationship
$\rho(x_b-x_i)(\np(x_b-\bar{x}))^{-1}$
where $x_b$ is
the baseline setting for the lights. For the experiment $x_b=90$\%, i.e.~the lighting setting used 
before the implementation of the social game.
However, we model each occupant as having a \emph{winning} basis function
given by
 $\phi_i(x_i,x_{-i})=-\rho c \left( {x_i} \right)^2$
where $\rho$ is the total number of points distributed by the planner and
$c$ 
is a scaling factor that is used primarily to scale the two terms of the
utility function given that we artificially inflate the points offered in order
to increase their appeal to 
 players and thus induce greater
participation\footnote{Inflating the points is a process of \emph{framing}~\cite{tversky:1981aa}---that is, dependent on how
the reward system is presented to agents greatly impacts their participation.
Framing is routinely used in rewards programs for credit cards among many other
point-based programs. The scaling factor $c$ in the winning function removes
the framing effect from the estimation procedure. It is selected to ensure the
scale of the two basis functions are similar.}. 
The form of the winning function can be
interpreted as capturing the perception that by voting zero, the occupant is
selecting the action that will provide the
greatest return of points given that points are awarded based on how energy 
efficient their vote is compared to
others\footnote{We
explored other forms of the winning function including the $\log$ function, a 
quasi-concave function that is typically used to represent how individuals value
money since it represents the diminishing returns property well~\cite{ratliff:2014ac}.
However, the quadratic form of the function we report on here
significantly outperformed other choices so that,
for the purpose of a {prescriptive} model, it captures the agents' perceptions about the point distribution
mechanism and their value more accurately.}.

Hence, the utility functions for the social game are modeled as
$f_i(x_i,x_{-i};\theta_i)=\theta_i\phi_i(x_i,x_{-i})+\psi_i(x_i,x_{-i})$.
The constraint sets $\mc{C}_i$ for each player are determined by the box
constraints on the lighting vote for that player, i.e.
 $\mc{C}_i=\{x_i\in \mb{R}|\ h_{i,j}(x_i)\geq 0, \ j\in\{1,2\}\}$
 where $h_{i,1}(x_i)=100-x_i$ and $h_{i,2}(x_i)=x_i$.

In order to formulate \eqref{opt-P} for the social game application, we need to
determine the admissible parameter sets $\Theta_i$, $i\in\mc{I}$ in such a way 
that we ensure the estimated
utility functions are concave and such that equilibria of the estimated game are
isolated.  We derive a lower bound $\theta_{\LB}$ such that all
$\theta_i\in\Theta_i=\{\theta_i\in \mb{R}|\ \theta_i>\theta_{\LB}\}$, $i\in
\mc{I}$ induce games with these characteristics.
%
%
To this end, we utilize the second derivative condition on players'
utility functions; that is, if for each $i\in\mc{I}$,
$D_{i,i}^2f_i(x)<0$, then the game is concave.
Computing $D_{i,i}^2f_i$ and using some algebra, we have that
$\theta_i>-(c\rho)^{-1}(1-\np^{-1})^2$ where the right-hand
side is 
a negative non-increasing function of $\np$. 
Thus, concavity is ensured regardless
of the number of players by setting $\np = 2$, the minimum number of players in a
non-cooperative game. 
Then, given fixed $\rho$ and $0<\zeta<<1$, the lower bound
$\bar{\theta}_{\LB} = -(4c\rho)^{-1}+\zeta$ will guarantee the estimated game is concave.

If $D\omega(x,\mu)$ is invertible, we know that differential Nash equilibria are
isolated~\cite{ratliff:2016aa}. Hence, we can augment the constraint sets $\Theta_i$ to encode this
condition. Given the structure of
the utility functions, $D\omega(x,\mu)$ is simply the game Hessian $H=[H_{i,j}]_{j,i=1}^\np$ with
$H_{i,i}=D_{i,i}^2f_{i}$ and $H_{i,j}=D_{i,j}^2f_i$.
Hence, if $H$ is invertible, then the differential Nash are isolated; this is
guaranteed for $\np \geq 4$ provided
the constraint defined by $\bar{\theta}_{\LB}=-(4c\rho)^{-1}+\zeta$ using
$\zeta=10^{-2}$. Indeed, let $H(\np)$ denote the game Hessian
as a function of the number of players and note that for a particular $\np$,
with some simple algebra, it is easy to write $H(\np)$ as a off-diagonal matrix
constant matrix such that $H_{ii}=d_i+\alpha$ and $H_{i,j}=\alpha$ where
$d_i=-2(1-1/\np)-2c\rho\theta_i$ and $\alpha=2(\np-1)/\np^2$.
It is
straightforward
to verify by determining the eigenvalues of $H$ as $\np$ varies via the method described
in~\cite{gendreau:1986aa} that for $\np\geq 4$, $H$ will be invertible .  
%
For the social game data, at each observation indexed by $k$, the number of
participating players is at least $4$. Thus, to ensure
concavity and isolated equilibria of the estimated social game, we define 
$\Theta_i=\{\theta_i\in \mb{R}|\ \theta_i>\bar{\theta}_{\LB}\}$
with $\bar{\theta}_{\LB}=-(c\rho4)^{-1}+\zeta$ with $\zeta=10^{-2}$. 


\section{Utility Learning Results}
\label{sec:results}
 \begin{figure*}[ht]
   \center \subfloat[\label{fig:dynamic}]{\includegraphics[width=0.515\textwidth]{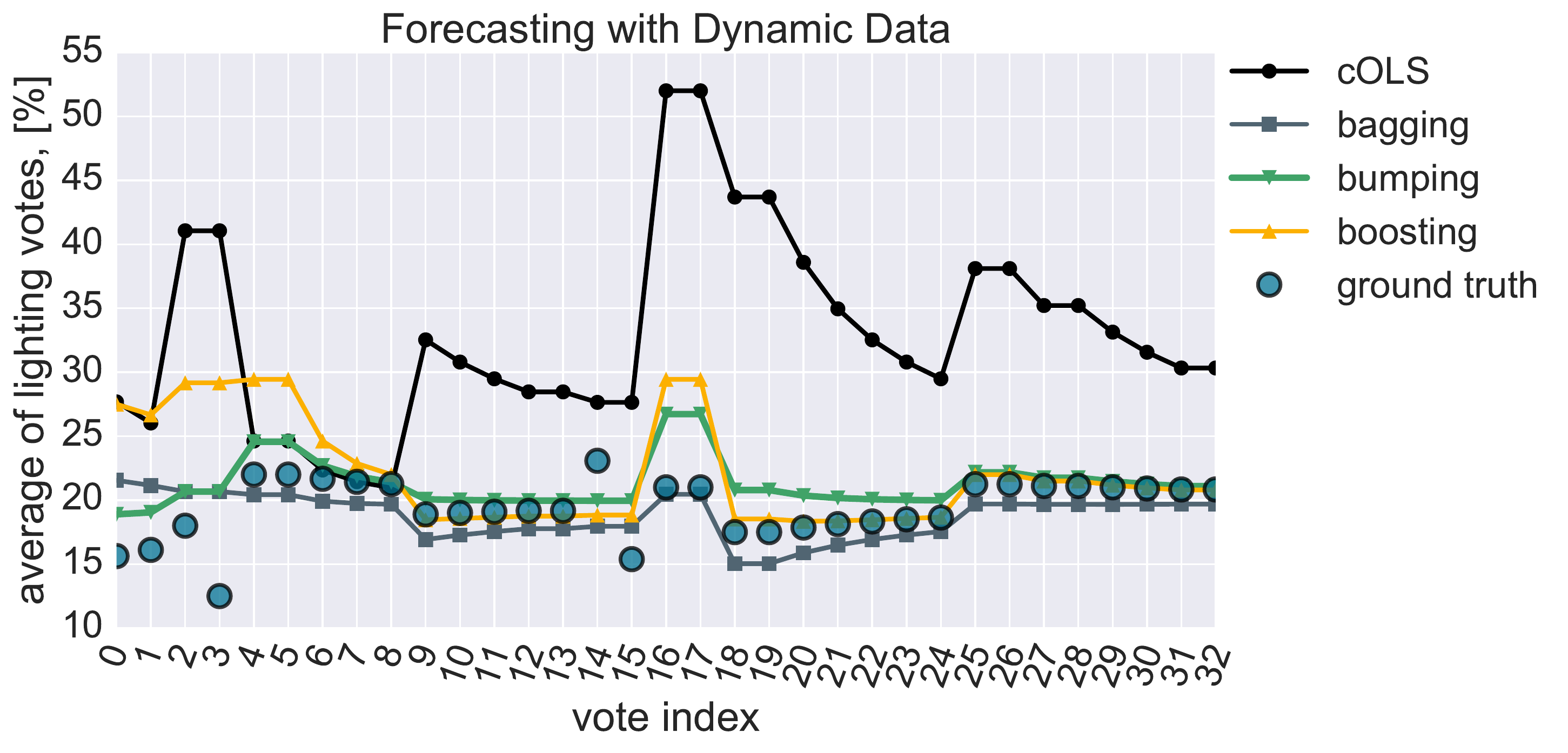}}
    \subfloat[\label{fig:average}]{\includegraphics[width=0.425\textwidth]{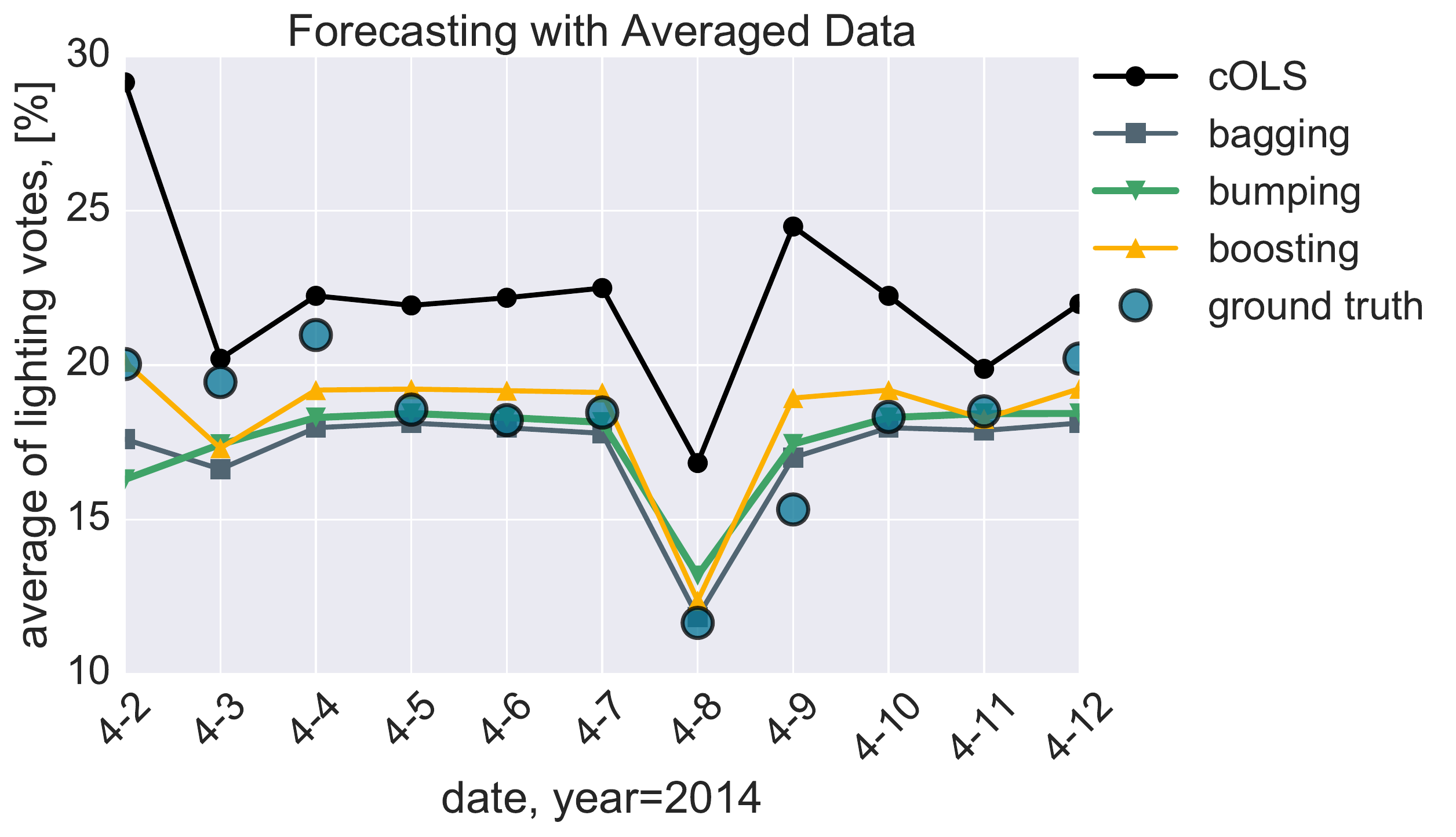}}
  \caption{Forecasting results for (a) dynamic data and (b) averaged
  data for the default lighting setting $20$: For
  the dynamic data, the $x$--axis values indicate the index of when a choice was made by one or more of the occupants (i.e. when
  the implemented lighting setting is changed); the time from one index to the
  next may be several minutes to hours depending on the activity of the
  occupants. For the averaged data, the $x$--axis values are dates (month and
  day).  The ground truth average of the lighting votes
  is depicted by the \textbf{\textcolor{blue!50!green!85!black}{blue dots}}; the
  forecast for cOLS is depicted in \textbf{black}; the
  forecast for bagging is depicted in \textbf{\textcolor{gray!85!blue!85!black}{gray}};
  the forecast for bumping is depicted in
  \textbf{\textcolor{green!70!black}{green}}; the
  forecast for boosting is depicted in
  \textbf{\textcolor{yellow!80!orange!90!black}{gold}}.
The forecast for the robust utility learning methods is approximately near the
ground truth
for both data sets while the cOLS estimates produce Nash equilibria with a large
error. }
  \label{fig:sims}
\end{figure*}

We now present the results of the proposed robust utility learning
method applied to data collected from the social game experiment.

As we previously described, our data set consists of the votes logged by the
players which vote throughout the day. We present estimation results for
the complete data set of all the votes---which we refer to as the \emph{dynamic
data set}---and estimation results for an aggregated data set constructed by
taking the average of a players' votes over the course of each day in the
experiment---this is referred to as the \emph{average data set}.
While this aggregation significantly reduces the size of our data set,
it smooths the players' voting profiles and
increases the size of active players in each game---occupants may arrive or
leave the office when they so choose. This \emph{average data set} also reduces
the computational load, which may be beneficial to a facilities manager in the
incentive design process, especially if the incentive scheme is   quasi-static
and uses historical data to generate the next incentive.
The dynamic data set is much richer, being composed of
 every vote (a total of $6,885$ votes) the occupants made throughout the duration of the
 experiment ($285$ days).
 The time from one vote to the next may be several minutes to hours depending on
 the activity of the occupants. This data set is much larger and thus, increases
 the computational load. However, it allows us to
 extract more distinct player profiles and can support
 real-time incentive design schemes.
  
  We present results for both data sets using data from the period of
  the experiment in which the default lighting setting was $20$\%---the results
  for the other default lighting settings are similar. 
The period of the experiment where the default lighting setting was
$20$\% consisted of $42$ days and thus the size of the averaged data set is
$42$. Over this period there were $220$ votes by occupants, which is the size of
the dynamic data set. 
We divide each of the data sets into training ($80\%$ of the data) and testing
($20\%$ of the data) sets and apply each of the methods discussed in
Section~\ref{sec:robutility}. We apply a $10$--fold cross
validation~\cite{Hastie09} procedure to limit overfitting.


{ \begin{table}[h!]
\centering
\caption{\scriptsize Root Mean Square Error (RMSE), Mean Absolute Error (MAE)
and Mean Absolute Scaled Error (MASE)~\cite{hyndman2006another} of forecasting
using the proposed robust utility learning methods vs cOLS estimators for both
data sets in default lighting setting $20$. The best performing method is
indicated in bold text for each of the data sets, dynamic and average.}
 \label{tab:rmse_20}
\begin{tabular}{|l|c|c|c|c|}
\hline
{\small \emph{\textbf{Dynamic, $\hat{f}_i$}}} & \textbf{bagging} & boosting & bumping & cOLS  \\ \hline
{  RMSE }& \textbf{8.31} & 10.11 & 12.56 & 22.53 \\ \hline
\textit{ MAE }& \textbf{5.20} & 6.55 & 6.38 & 18.35 \\ \hline
\textit{ MASE }&\textbf{ 2.08 }& 6.38 & 2.55 & 7.34\\\hline  \hline
{\small \emph{\textbf{Averaged, $\hat{f}_i$}}} & bagging & \textbf{boosting} & bumping & cOLS  \\ \hline
 \textit{ RMSE }& 2.05 & \textbf{1.68} & 1.96  & 9.36 \\ \hline
 \textit{ MAE }& 1.58 & \textbf{1.31} & 1.48 & 6.01 \\ \hline
 \textit{ MASE }& 0.71 & \textbf{0.59} & 0.67 & 2.69  \\ \hline
\end{tabular}
\end{table}
}

\subsection{Forecasting via Robust Utility Learning} 

We estimate the parameters using cFGLS and the ensemble methods bagging,
bumping, and boosting for both the average and dynamic data sets. For gradient
boosting, we use the HC\textsubscript{4} noise structure
(see~\eqref{eq:HC4noise}) since the \emph{leverage} values $b_{ii}$ of $B$
are larger~\cite{cribari2004asymptotic};
in each of the other methods, we used the block diagonal noise structure
(see~\eqref{eq:Freedmannoise}).

Using the estimated utility functions, we simulate the game using a projected
gradient descent algorithm which is known to converge for concave
games~\cite{flam:1990aa}.
In Figure~\ref{fig:dynamic} and~\ref{fig:average}, we compare the ground truth
voting data to the predictions for each of the learning schemes using the
dynamic and averaged data sets, respectively. Our proposed robust models---i.e.~using the estimated parameters obtained via bagging,
bumping, and boosting---capture most of the variation in the true votes (in both
data sets) and significantly outperform cOLS.
 In Table~\ref{tab:rmse_20}, using three metrics---Root Mean Square Error (RMSE),
Mean Absolute Error (MAE), and Mean Absolute Scaled Error (MASE)---we report
the forecasting error for each of the methods. 

The
estimated models using our robust utility learning methods significantly
reduce the forecasting error as compared to cOLS.
The cOLS method has particularly poor forecasting performance on the dynamic
data set since it does not capture the correlated error terms describing the interactions between users. Moreover, our robust methods 
perform better than cOLS with the averaged data set even though the sample size
is small.

As for the ensemble methods, bagging outperforms the
other three methods when using the dynamic data set.
On the other hand, for the averaged data set, gradient boosting gives the
least forecasting error. This is in large part due to the fact that we use the
HC\textsubscript{4}
noise structure. Since the average data set has been smoothed, we expect less
correlation between players and the HC\textsubscript{4} noise structure captures this.

\ifplot
\begin{figure*}[ht]
  \begin{center}
    \subfloat[\label{fig:contour_2}]{\includegraphics[width=0.41\textwidth]{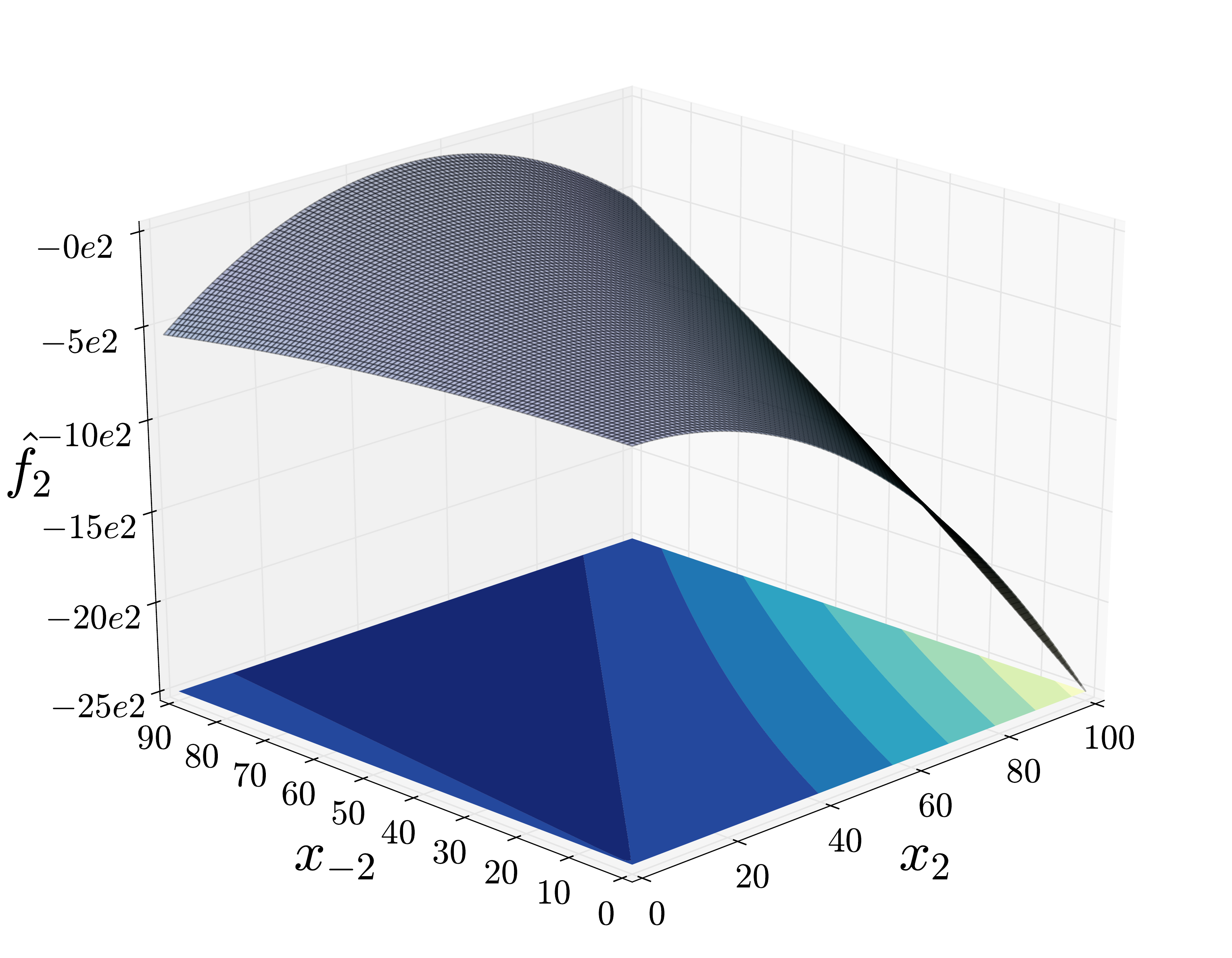}}
    \subfloat[\label{fig:contour_8}]{\includegraphics[width=0.41\textwidth]{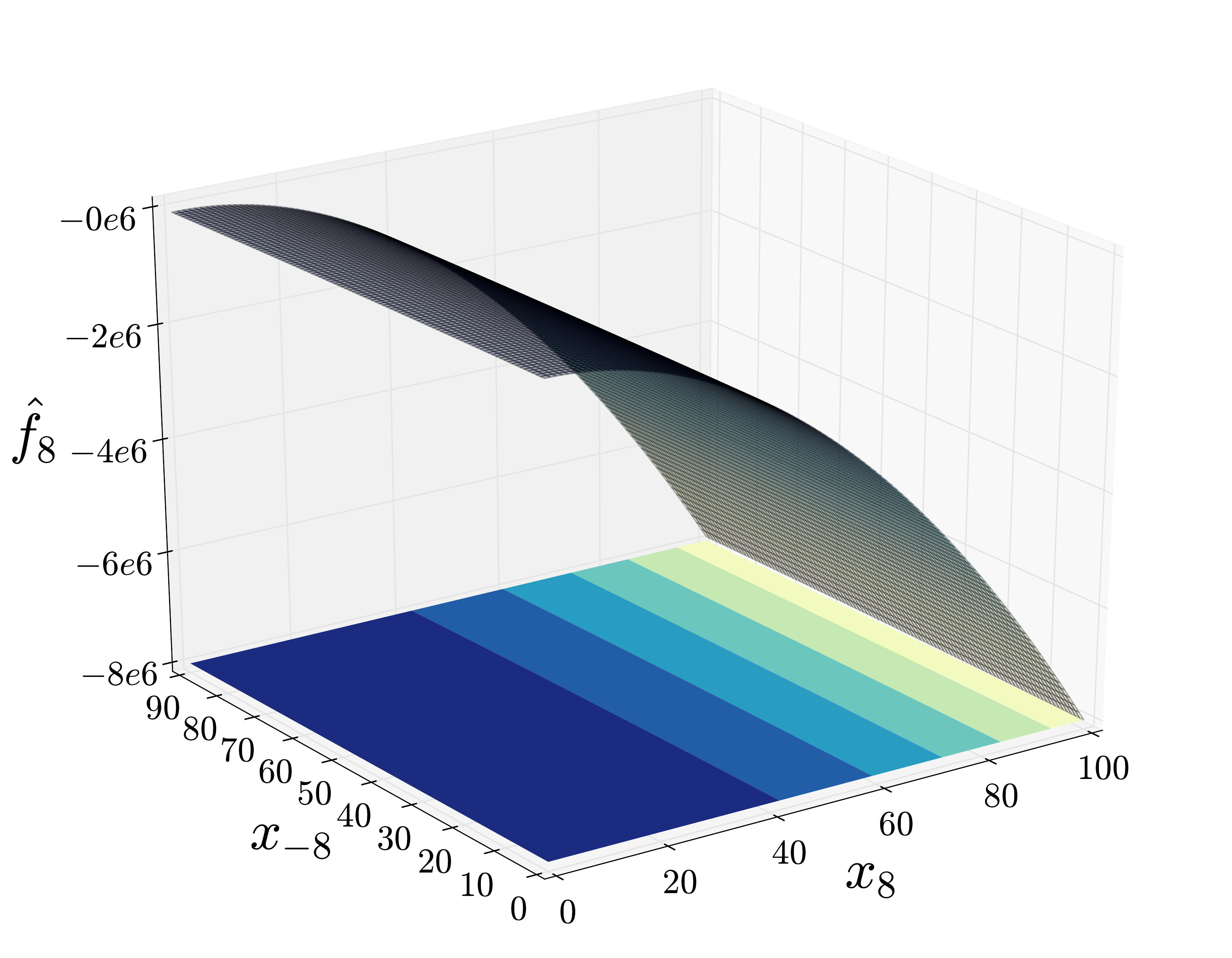}}
  \end{center}
  \caption{Bagging estimated utility functions---using the
  dynamic data set---of (a) agent $2$ and (b) agent $8$. The functions are
  plotted as a function of each agent's own vote $x_2$ (resp.~$x_8$) and other
  players' votes $x_{-2}$ (resp.~$x_{-8}$). Notice that agent $8$, a very
  aggressive player, is indifferent to the choices of the other agents as
  indicated by the fact that its utility is maximized in the same location given
  any value of $x_{-8}$. On the other hand, occupant $2$ responds to changes in
  the other agents' votes and appears to prefer a greater lighting settings
  (more illumination). This indicates that there are different types of players
  and thus, incentives may need to be designed individually for these player
  types in order to elicit the desired response. }
  \label{fig:contour_bagging}
\end{figure*}

\fi

{\begin{figure}[ht]
  \begin{center}
    \includegraphics[width=0.825\columnwidth]{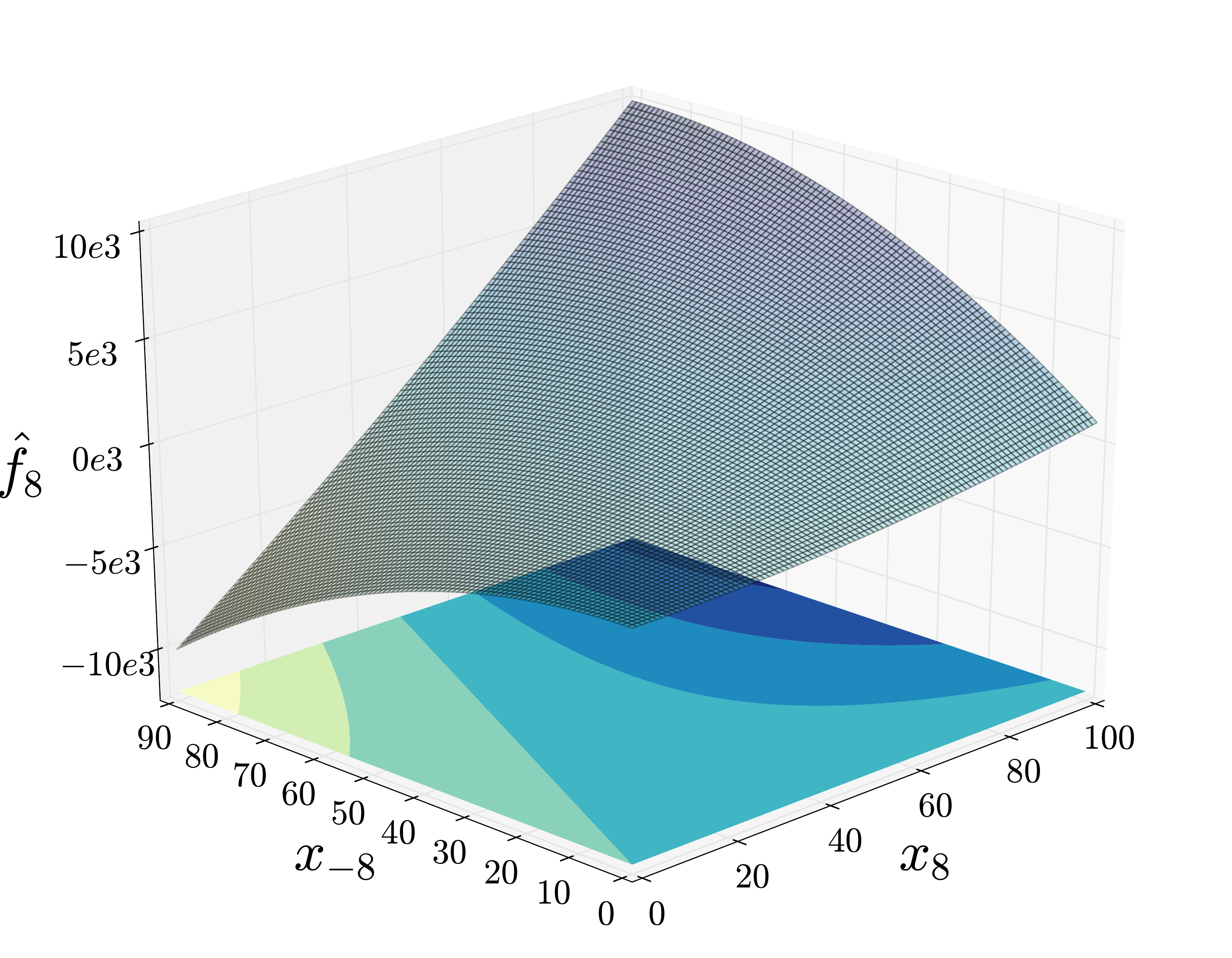}
  \end{center}
  \caption{Agent $8$'s cOLS estimated utility function---using the
  dynamic data set---plotted as a function of $(x_8, x_{-8})$. This figure
  demonstrates that using cOLS (the worst performing estimator) results
in learning a utility function that is not representative of this type of player's
behavior (as can be seen by comparing to Figure~\ref{fig:contour_8}). Incentives
or control designed using this function may result in performance.}
  \label{fig:contour_cOLS}
\end{figure}
}

\subsection{Estimated Utility Functions}
Figure~\ref{fig:contour_bagging} shows the estimated utility functions and their
contour plots for occupants $2$ and $8$---passive and aggressive occupants
respectively---using the parameters obtained via the bagging
ensemble method with the dynamic data set. We remark that we do not
  observe the actual value of agents' utilities; we instead observe only the agents'
  decisions. The purpose of the figures is to show the estimated utility shapes for
  players with significantly different voting profiles (the observable we
  have).
  The particular occupants we selected
represent players that prefer \emph{winning} to lighting satisfaction (occupant $8$) and
players that prefer \emph{lighting satisfaction} to winning (occupant $2$). In particular, occupant $2$'s
estimated utility function appears to be higher at greater lighting settings. 
Exactly the opposite occurs for occupant $8$ whose estimated utility function indicates
that despite changes in the average lighting vote of other players, occupant $8$
aggressively votes for a zero lighting setting which returns the most points.

For comparison---and to highlight the improvement that the robust utility learning
framework offers---in Figure~\ref{fig:contour_cOLS} we show the estimated utility
function for occupant $8$ using cOLS. What we see is a very different
utility function that indicates occupant $8$ cares more about
lighting satisfaction than winning---indicated by the fact that its utility is not
maximized at zero. 
This is misleading since occupant $8$ predominately votes for zero. This is
significant since incentive/control design based on such an erroneous utility
function may lead to very poor performance and occupant dissatisfaction.

\subsection{Bias Approximation and Bias--Variance Tradeoff}
Forecasting accuracy can be enhanced by allowing for a
small amount of bias if it results in a large reduction in variance. 
For a process $Y = X \theta + \epsilon$, 
the Mean Square Error (MSE) characterizes the \emph{bias--variance
tradeoff}:
\begin{align}\label{eq:biasvariancetradoff}
    \textstyle \text{MSE}(x) &= E [(Y - \theta_{\text{est}}^{\top} x)^{2}] \\
   & =\underbrace{(E[\theta_{\text{est}}^{\top} x] - Y)^{2}}_{\text{bias}} + \underbrace{E [
        (\theta_{\text{est}}^{\top} x - E[\theta_{\text{est}}^{\top} x])^{2}] }_{\text{variance}}
\end{align} 
Introducing bias in exchange for reduced variance is 
widely used in ridge regression and in lasso techniques in the form of 
\textit{a priori} knowledge~\cite{Hastie09}. In our robust utility learning framework, we introduce noise structures
that approximate the true data process so that we can fit
cFGLS estimators that are nearly unbiased for those players whose historical
voting record has a large amount of variation.


{
\begin{table}[]
\centering
\caption{{The cFGLS estimator value and the bagging, gradient
boosting and bumping ensemble methods bias approximation for the most active
users. We utilized the dynamic data set from the period in which the default
lighting setting was set to $20$. In bold, we denote the occupants with nearly
unbiased  estimators. }
 }
\label{tab:bias_table}
\begin{tabular}{|c|c|c|c|c|}
\hline
Id & cFGLS & Bagging Bias & Boosting Bias & Bumping Bias \\ \hline\hline
\textbf{2 }      & \textbf{-0.7}  &  \textbf{0.11}  & \textbf{0.17}
&\textbf{0.02} \\ \hline
6      & 0.5        &         1.12        & 1.77        &         0.93 \\ \hline
8       & 298.1        &         -176.9         & -370.3         &         120.5 \\ \hline
14       & 337.5         &         -186.3        & -400.2        &         149.7 \\ \hline
\textbf{20}      & \textbf{-0.8}         &         \textbf{0.07}        &
\textbf{0.21}        &         \textbf{-0.53} \\ \hline
\end{tabular}
\end{table}
}

We approximate the bias for each of the estimators. In
Table~\ref{tab:bias_table}, we present cFGLS estimates obtained using the
dynamic data during the time window in which the default lighting setting was
$20$\%\footnote{The results for the other default lighting settings are
similar.} for selected occupants---the
most active players---as
well as the approximated bias for the estimates generated by bagging,
 bumping, and boosting.

\begin{figure*}[ht!]
  \begin{center}
    \subfloat[\label{fig:unbiased_dynamic}]{\includegraphics[width=0.45\textwidth]{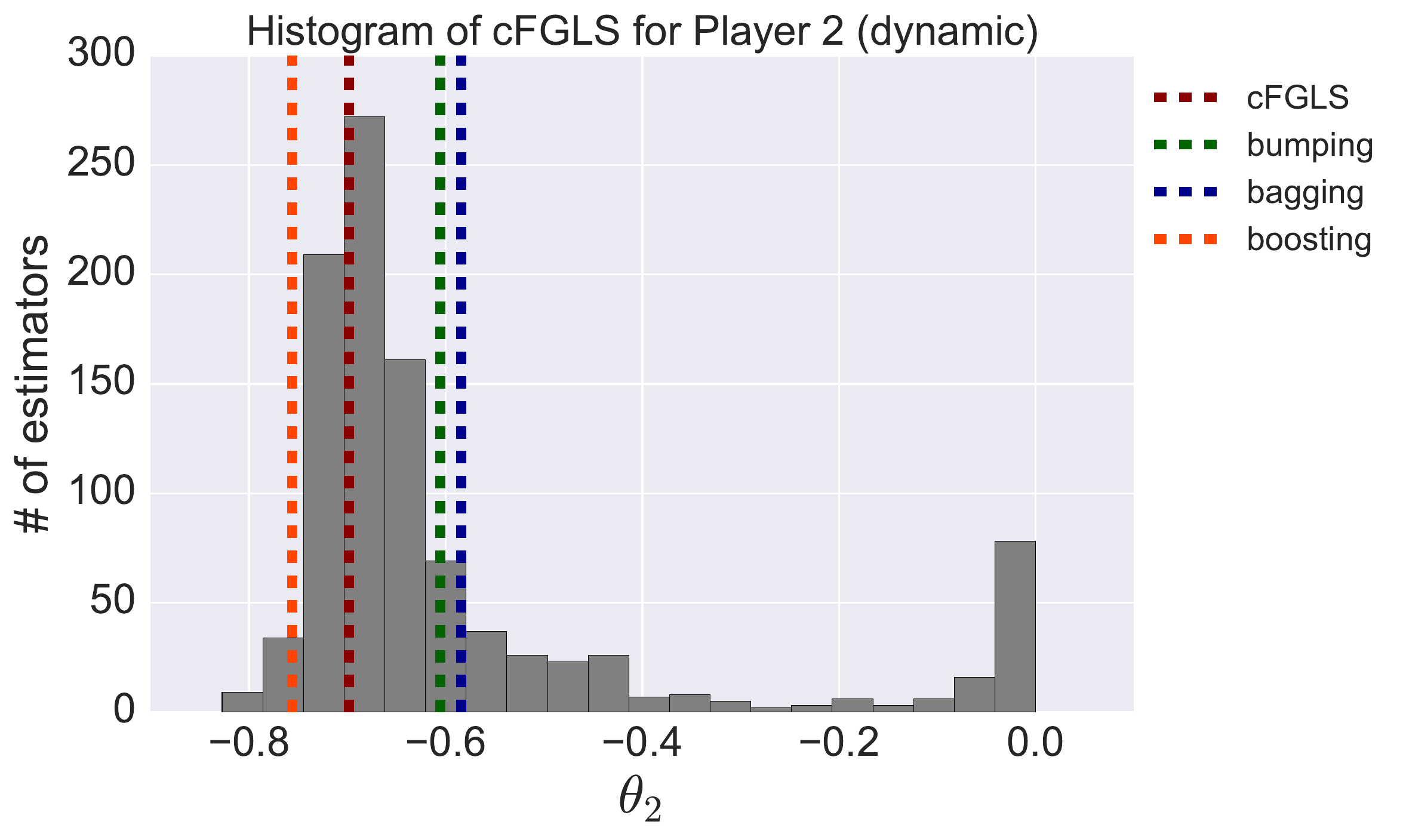}}
    \subfloat[\label{fig:biased_dynamic}]{\includegraphics[width=0.45\textwidth]{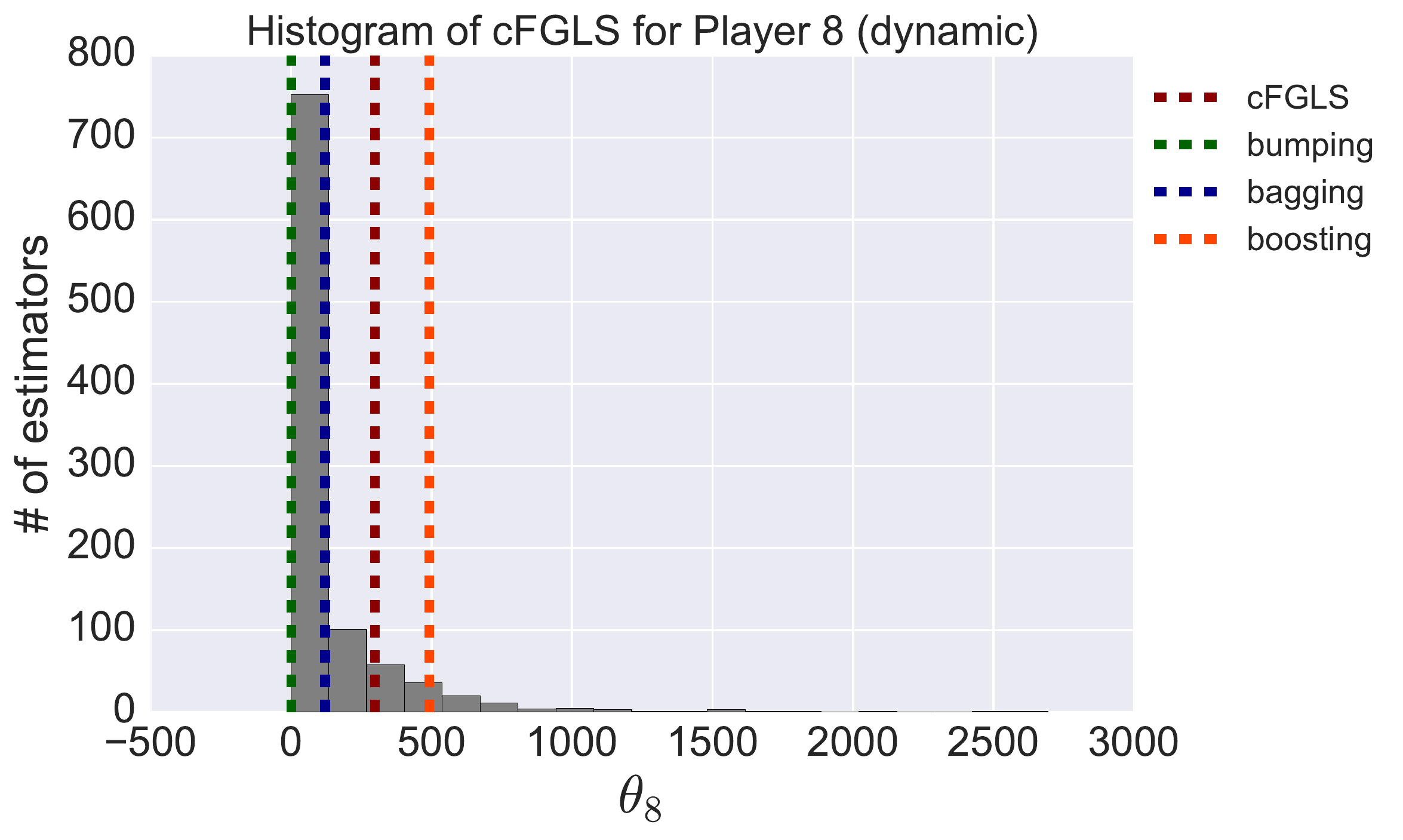}}
  \end{center}
  \caption{ The histograms depict the estimates generated with the wild
  bootstrapping technique using the \emph{dynamic data set} for (a) player $2$ and
  (b) player $8$. The vertical lines mark the value of the
  \textcolor{red!50!black}{\textbf{cFGLS
  (red)}}, \textcolor{dgreen!50!black}{\textbf{bumping (green)}},
  \textcolor{blue!50!black}{\textbf{bagging
  (blue)}}, and \textcolor{orange!85!red!95!black}{\textbf{boosting
  (orange)}} estimators. 
  The histogram for player $2$ is approximately normally distributed around the
  initial cFGLS estimator, indicating that it is unbiased. 
  On the other hand, this is not the case for player $8$. 
  Thus its cFGLS
  estimator is biased. 
  Overall,
  the majority of the proposed ensemble methods result in a significant reduction in
  variance in exchange for an small increase in bias and greater forecasting
  accuracy. In our other work~\cite{konstantakopoulos:2016ab}, we
  develop a hierarchical mixture model that considers both bias and variance.}
  \label{fig:un_bias_dynamic}
\end{figure*}

\begin{figure}[h!]
  \begin{center}
    \includegraphics[width=0.9\columnwidth]{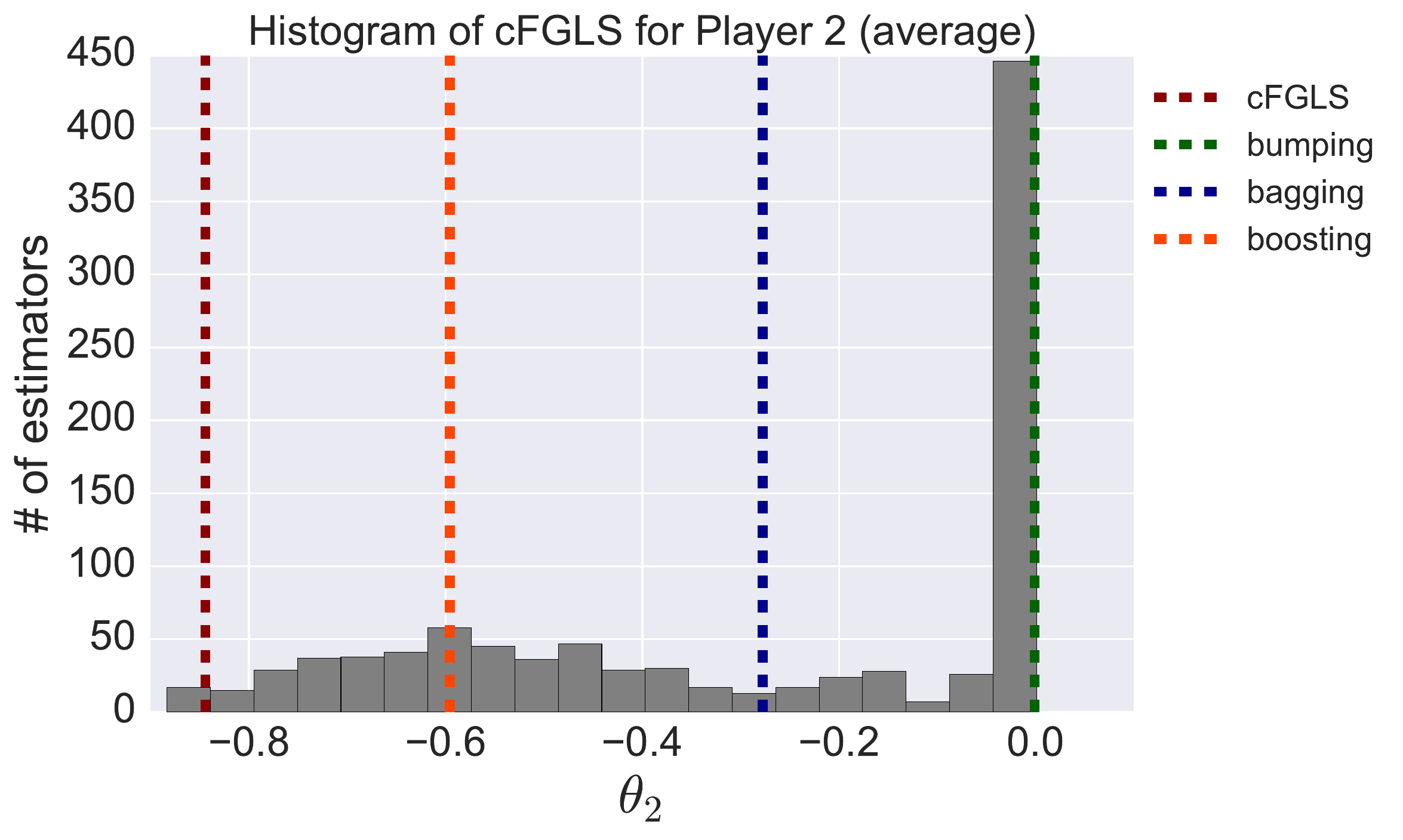} 
  \end{center}
  \caption{The histogram depicts estimator values for player $2$ using the
  wild bootstrapping technique using the \emph{average data set}. The vertical
  lines mark the value of the \textcolor{red!50!black}{\textbf{cFGLS
  (red)}}, \textcolor{green!35!black}{\textbf{bumping (green)}},
  \textcolor{blue!50!black}{\textbf{bagging
  (blue)}}, and \textcolor{orange!85!red!95!black}{\textbf{boosting
  (orange)}} estimators.  We remark that the estimators are all biased.
  This is expected due to limited sample
  size of the average data set. Thus, the average data set cannot be used for optimizing the bias-variance tradeoff.}
  \label{fig:un_bias_average}
\end{figure}

Figures~\ref{fig:un_bias_average} and \ref{fig:un_bias_dynamic} contain
histograms of the cFGLS estimators obtained using the bootstrapped average and
dynamic data,
respectively. In each of these histograms, we also indicate the original
cFGLS\footnote{This is the cFGLS estimator produced using the original average
and dynamic data sets
and not the bootstrapped data sets.} (indicated in
\textcolor{red!50!black}{\textbf{red}}), bagging (indicated in
\textcolor{blue!50!black}{\textbf{blue}}), bumping (indicated in
\textcolor{green!35!black}{\textbf{green}}), and  boosting (indicated in
\textcolor{orange!85!red!95!black}{\textbf{orange}}) estimators with
dashed vertical lines.

The histogram in Figure~\ref{fig:un_bias_average} contains the cFGLS estimators
for occupant $2$. This histogram is representative of the other occupants for
the average data set.
We see that the original cFGLS, bagging, bumping, and boosting estimators each
show some amount of bias. This is largely due to the fact that the average data
set has a small sample size.


On the other hand, in Figure~\ref{fig:unbiased_dynamic} we show the histogram of 
 cFGLS estimators for occupant $2$  produced via bootstrapped dynamic data and
 we can see that
 the original cFGLS estimator (vertical \textcolor{red!50!black}{\textbf{red}}
 line) is nearly unbiased, indicated by the approximate Gaussian distribution around the cFGLS estimate. This is generally true for the occupants with the
 most variation and frequency in their voting record. However, bagging, bumping,
 and boosting
produce estimates that are slightly
biased in exchange for a reduction in estimator variance---see~\eqref{eq:biasvariancetradoff}. 

Occupant $2$ is representative of players which prefer to focus on lighting
satisfaction as opposed to winning whereas
occupant $8$ is representative of players which prefer winning to lighting
satisfaction.
While a very active voter, frequently participating in the
game, occupant $8$'s voting record has little variation (the majority of the time
$x_8=0$).
Figure~\ref{fig:biased_dynamic} contains the cFGLS estimators for occupant $8$
and we see that each of the estimators are slightly biased. Again, these
estimators introduce bias in exchange for a reduction in variance.

\begin{table*}
  \centering
    \caption{Estimated covariance matrix for the most active players using the
    ({\bf a}) dynamic data set and ({\bf b}) average data set. 
 The colored column-row pairs indicate the agents whose utilities we modify to
 create the correlated game; the column indicates the agent(s) whose estimated parameter is
 used to modify the row agent's utility.
 In particular, 
 agent $2$'s utility function is modified by
 by agent $20$'s estimated parameter (\textbf{\textcolor{red!50!black}{red}}),
 agent $8$'s utility function is modified by
 agent $14$'s estimated parameter (\textbf{\textcolor{green!35!black}{green}}),
 and agent $14$'s utility function
     is modified by agent $2$'s and
     agent $8$'s estimated parameter (\textbf{\textcolor{blue!55!black}{blue}}).
     Note that agents $2$ and $14$ are anti-correlated,
     where agents $8$ and $14$ (resp.~agents $2$ and $20$) are positively
     correlated. Agents $2$ and $20$ are passive players, voting more for
     comfort than winning, where agents $8$ and $14$ vote more aggressively. }
\subfloat[][Average Data]{  \begin{tabular}{l|}
    Id\\ \hline\hline
    \textbf{ \textcolor{red!50!black}{2}}\\\hline
    6\\\hline \textbf{\textcolor{green!35!black}{8}}\\\hline
    \textbf{\textcolor{blue!55!black}{14}}\\\hline
    20 \\\hline
  \end{tabular}
  \begin{tabular}{|c|c|c|c|c|}
    \textbf{\textcolor{blue!55!black}{2}} & 6 &\textbf{ \textcolor{blue!55!black}{8}}
    &\textbf{\textcolor{green!35!black}{ 14}} &
    \textbf{\textcolor{red!50!black}{20}}\\
    \hline\hline
    0.086 & 0.080 & -0.190 &
    {-0.248} &\cellcolor{orange!15} \textbf{\textcolor{red!50!black}{0.059}} \\
    \hline
 0.080 & 7.56 & 8.64 & 9.02 & 0.028\\
\hline
-0.190 & 8.64 & 170.98 &
\cellcolor{green!15} \textbf{\textcolor{green!35!black}{44.29}} & -0.337\\
\hline
\cellcolor{blue!15}\textcolor{blue!55!black}{\textbf{-0.248}} & 9.02
&\cellcolor{blue!15}\textcolor{blue!55!black}{\textbf{44.29}} & 87.34  & -0.312\\
\hline
0.059 & 0.028 & -0.337 & -0.312 & 0.063\\
\hline
\end{tabular}}\hspace{1cm}
\subfloat[][Dynamic Data]{\begin{tabular}{l|}
    Id\\ \hline\hline
    \textbf{ \textcolor{red!50!black}{2}}\\\hline
    6\\\hline \textbf{\textcolor{green!35!black}{8}}\\\hline
    \textbf{\textcolor{blue!55!black}{14}}\\\hline
    20 \\\hline
  \end{tabular}
  \begin{tabular}{|c|c|c|c|c|}
    \textbf{\textcolor{blue!55!black}{2}} & 6 &\textbf{ \textcolor{blue!55!black}{8}}
    &\textbf{\textcolor{green!35!black}{ 14}} &
    \textbf{\textcolor{red!50!black}{20}}\\
    \hline\hline
    0.044 & 0.059 & -2.805 &
    {-5.191} &\cellcolor{orange!15} \textbf{\textcolor{red!50!black}{0.031}} \\
    \hline
 0.059 & 7.836 & -16.82 & 0.844 & -0.016\\
\hline
-2.805 & -16.82 & 6.43$\times$10$^4$ &
\cellcolor{green!15} \textbf{\textcolor{green!35!black}{4.28$\times$10$^4$}} & -7.60\\
\hline
\cellcolor{blue!15}\textcolor{blue!55!black}{\textbf{-5.191}} & 0.844
&\cellcolor{blue!15}\textcolor{blue!55!black}{\textbf{{4.28$\times$10$^4$}}} & 8.84$\times$10$^4$  & -12.59\\
\hline
0.031 & -0.016 & -7.60 & -12.59 & 0.073\\
\hline
\end{tabular}}

\label{tab:cov}
\end{table*}


\begin{figure*}[h!]
\center    \subfloat[\label{fig:dynamic-corr}]{\includegraphics[width=0.515\textwidth]{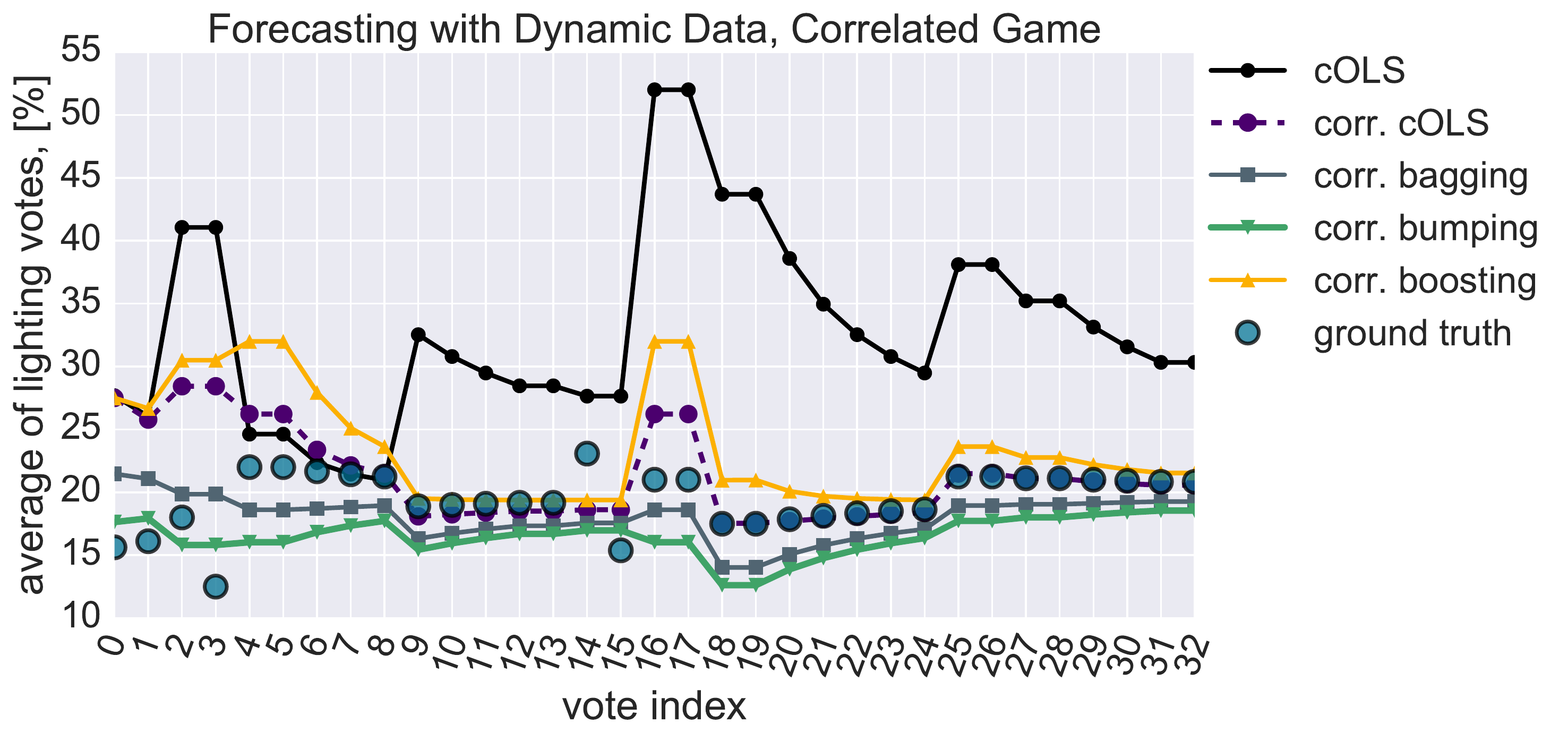}}
    \subfloat[\label{fig:average-corr}]{\includegraphics[width=0.425\textwidth]{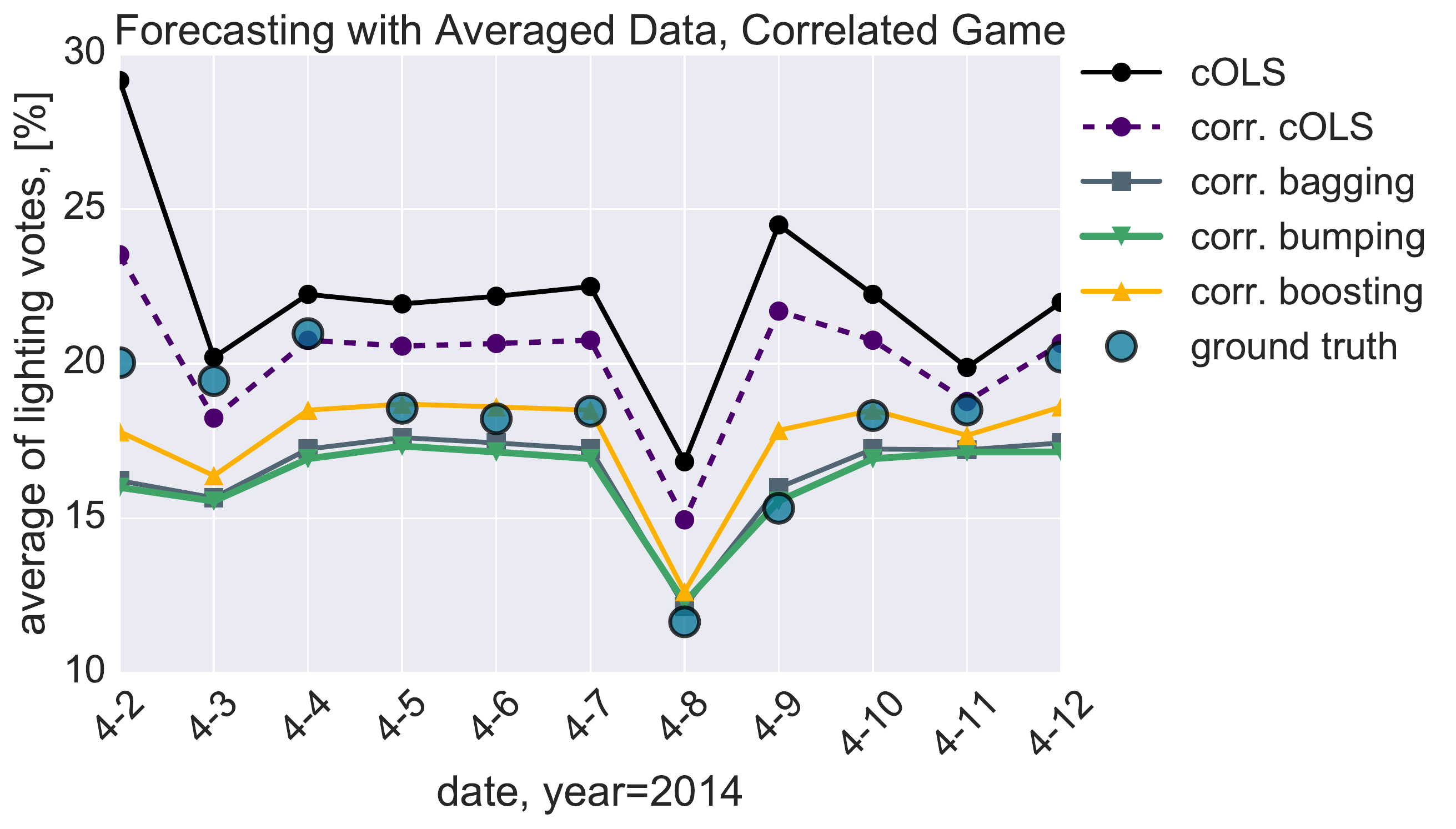}}
  \caption{Forecasting results for the correlated game using (a) dynamic data and (b) averaged
  data for the default lighting setting $20$: For
  the dynamic data, the $x$--axis values indicate the index of when a choice was made by one or more of the occupants (i.e. when
  the implemented lighting setting is changed); the time from one index to the
  next may be several minutes to hours depending on the activity of the
  occupants. For the averaged data, the $x$--axis values are dates (month and
  day).  The ground truth of the average of the lighting votes
  is depicted by the \textbf{\textcolor{blue!50!green!85!black}{blue dots}}; the
  forecast for cOLS is depicted in \textbf{black}; the forecast for correlated
  cOLS is depicted in \textbf{\textcolor{purple!45!black!75!blue}{purple}}; the
  forecast for correlated bagging is depicted in \textbf{\textcolor{gray!85!blue!85!black}{gray}};
  the forecast for correlated bumping is depicted in
  \textbf{\textcolor{green!50!black}{green}}; the
  forecast for correlated boosting is depicted in
  \textbf{\textcolor{yellow!40!orange!90!black}{gold}}.
The forecast for the robust utility learning methods is approximately near the
ground truth
for both data sets while the cOLS estimates produce Nash equilibria with a large
error. However, the correlated cOLS forecast significantly improves on the cOLS
forecast. }
  \label{fig:sims-corr}
\end{figure*}

\subsection{Forecasting via Approximated Correlated Game}
\label{subsec:corr_res}

We now show the results for the correlated utility learning method. Let us use the notation 
\begin{align}
  \hat{g}_i(x_i,x_{-i}; \{\hat{\theta}_j\}_{j\in
\mc{K}_i})=\textstyle\sum_{j\in \mc{K}_i}&
z_{i,j}\sigma_{i,j}\psi_i(x_i,x_{-i})\notag\\
&+\sigma_{i,j}\hat{\theta}_j\phi_i(x_i,x_{-i})  
  \label{eq:hatgres}
\end{align}
where recall that $\mc{K}_i\subset \mc{I}$ is the index set for the players whose
parameters are used to modify player $i$'s utility function in generating the
correlated game and $\hat{\theta}_j$ is the estimated parameter from the
utility learning methods including cOLS, cFGLS, bagging, bumping, and boosting. 
We use the
notation $\hat{g}_i(\cdot; \{\hat{\theta}_j\}_{j\in \mc{K}_i}\})$ as
short-hand.

In Table~\ref{tab:cov}, we show a subset of the estimated covariance matrices obtained using
the dynamic and average data sets. Using these values, we construct the following correlated game. 
Player $2$'s
utility function is modified by player $20$'s:
\begin{align}
  \hat{g}_2(x_2,x_{-2}; \mc{K}_2)& 
  =\textstyle\left(z_{2,2}\sigma_{2,2}+z_{2,20}\sigma_{2,20}\right)\psi_2(x_2,x_{-2})\notag\\
  &+(\sigma_{2,2}\hat{\theta}_2+\sigma_{2,20}\hat{\theta}_{20}
  )\phi_2(x_2,x_{-2})
  \label{eq:2and20}
\end{align}
where $\mc{K}_2=\{2,20\}$. Player $2$ and $20$ are passive players in that their votes tend to be strongly
related to their lighting satisfaction as opposed to increasing their chances of winning. They
are also very active players, having a lot of variation in their voting record.
These two players
are positively correlated with one another (see the
\textcolor{red!50!black}{\bf red} cells in Table~\ref{tab:cov}).

On the other hand, player $8$ and $14$ are aggressive players in that their
votes tend to be much lower indicating a greater desire to win points. These
players are also positively correlated (see the \textcolor{green!35!black}{\bf green}
cell's in Table~\ref{tab:cov}).
With this in mind, we modify player $8$'s utility function by player $14$'s:
\begin{align}
  \hat{g}_8(x_8,x_{-8};\mc{K}_8)& 
  =\left(z_{8,8}\sigma_{8,8}+z_{8,14}\sigma_{8,14}\right)\psi_8(x_8,x_{-8})\notag\\
  &+( \sigma_{8,8}\hat{\theta}_8+\sigma_{8,14}\hat{\theta}_{20}
  )\phi_8(x_8,x_{-8})
  \label{eq:8and14}
\end{align}
where $\mc{K}_8=\{8,14\}$. 

Player $14$ is also negatively correlated with player $2$. Hence, player $14$'s
utility function is modified by player $2$'s and $8$'s utilities. That is, with
$\mc{K}_{14}=\{2,8,14\}$, we have
\begin{align}
  \hat{g}_{14}(x_{14},x_{-14}; \mc{K}_{14})
  &=\textstyle\sum_{i\in\mc{K}_{14}}(z_{14,i}\sigma_{14,i}
  \psi_{14}(x_{14},x_{-14})\notag\\
  &\quad+\sigma_{14,i}\hat{\theta}_{i}\phi_{14}(x_{14},x_{-14}))
%
  \label{eq:play14}
\end{align}

{\begin{table}[]
\centering
\caption{\scriptsize Root Mean Square Error (RMSE), Mean Absolute Error (MAE)
and Mean Absolute Scaled Error (MASE) of forecasting using the estimated
correlated utility functions. We estimated correlated utility functions
$\hat{g}_i(\cdot;\{\theta_j\}_{j\in \mc{K}_i})$ using parameters from the
 bagging, bumping, boosting, and cOLS methods 
for both data sets in default lighting setting $20$. }
 \label{tab:rmse_20-corr}
\begin{tabular}{|l|c|c|c|c|}
\hline
{\small \emph{\textbf{Dynamic, $\hat{g}_i$}}} & bagging & boosting
& bumping &cOLS  \\ \hline
{  RMSE }& \textbf{6.38} & 9.58 & 8.82 & 8.44 \\ \hline
\textit{ MAE }& \textbf{4.59} & 6.81 & 5.52 & 5.58 \\ \hline
\textit{ MASE }&\textbf{ 1.84 }& 2.72 & 2.21 & 2.23  \\ \hline \hline
{\small \emph{\textbf{Averaged, $\hat{g}_i$}}} &  bagging &  boosting &
bumping &
 cOLS  \\ \hline
 \textit{ RMSE }& 2.18 & \textbf{1.63} & 2.36  & 2.83 \\ \hline
 \textit{ MAE }& 1.75 & \textbf{1.27} & 1.92 & 2.30 \\ \hline
 \textit{ MASE }& 0.78 & \textbf{0.56} & 0.86 & 1.03  \\ \hline
\end{tabular}
\end{table}
}


All the other players'
utilities in the correlated game remain unchanged; that is, they are taken to be $\hat{g}_i=\hat{f}_i$, $i\in\mc{I}/\{2,8,14\}$.

These player combinations were selected since, through the
correlated game,
we aim to {improve} our estimators by leveraging correlations between
players. In particular, the goal is to utilize information learned from players
with the most variation in their votes in improving the estimates of
players who consistently vote the same value or have a limited
participation record.

In Table~\ref{tab:rmse_20-corr}, we
present the RMSE, MAE, and MASE for the estimated correlated game 
$\{\hat{g}_i(\cdot;\{\hat{\theta}_j\}_{j\in \mc{K}_i})\}_{i\in \mc{I}}$
where the $\hat{\theta}_j$'s are taken to be the cOLS, bagging, boosting, and
bumping estimators. Comparing these results to those in Table~\ref{tab:rmse_20},
we see that correlated estimation schemes applied to the dynamic data set reduce the estimation error for almost
every method. Moreover, correlated bagging outperforms bagging, the best performing
ensemble method, by all three metrics. For the average data set, correlated
boosting outperforms the best performing ensemble method, boosting, again by
all three metrics. 

In Figure~\ref{fig:sims-corr}, we show the forecast produced by the correlated
utility learning method using the cOLS, bagging, bumping, and boosting
estimators  and 
the ground truth test data. Figure~\ref{fig:dynamic-corr}
and~\ref{fig:average-corr} are the forecasts for the dynamic and average data
sets, respectively.


What is perhaps most interesting is that, for both data sets, the correlated
cOLS results improve the forecasting error as compared to cOLS and
the results are not significantly different than the
other ensemble methods. This can be seen in Table~\ref{tab:rmse_20-corr} and
Figure~\ref{fig:sims-corr}. 
The importance of this finding is that correlated cOLS has the potential to be
integrated into an online algorithm.
The classical cOLS can be
performed online and is, thus, amenable to an online incentive design
framework~\cite{ratliff:2015aa, ratliff:2014aa}. However, as we have seen, the ensemble methods
significantly outperform cOLS. Determining the estimated
covariance matrix requires solving a generalized least squares (GLS) and noise
covariance estimation problem~\cite{isermann:2011aa}. 
Given that the estimated correlated game using
cOLS parameters provides nearly the same estimation error as the ensemble
methods, these methods can be adapted to estimate the correlated game parameters
and then introduced into an adaptive incentive design framework. We are
currently exploring this extension as the ultimate objective is to utilize the
learned utilities in an incentive design framework, preferably one that can be
executed in an adaptive/online manner. This will support a more
robust online utility learning and incentive design algorithm.


\section{Discussion}
\label{sec:conclusion}
We presented a general framework for robust utility learning using a
heteroskedastic inference adaptation to cGLS and we leveraged learned
correlations between players in constructing a correlated utility learning
framework that matches the robust utility learning errors while also being
amenable to online implementation. The latter is important for integrating the
proposed utility learning techniques with adaptive control or online incentive
design. For example, it has been shown that static programs for encouraging
energy efficiency are subject to the rebound effect in which participants often
return to less efficient behavior after some time~\cite{laitner:2000aa,schipper:2000aa}. By integrating our utility
learning framework with incentive design, we will be able to create an adaptive
model that learns how users' preferences change over time and thus, generate the
appropriate incentives to ensure active participation. 

To demonstrate the utility learning methods, we applied them to data collected
from a smart building social game we conducted where occupants vote for shared resources and
participate in a lottery.
We were able to estimate nearly unbiased estimators for several agent
profiles and significantly reduce the forecasting error as compared
to cOLS.
The robust utility learning framework enables us to effectively
\emph{close the loop} around smart building occupants by providing the
foundation for learning a decision-making model that can be integrated into the
incentive or control design process. While we apply the method to smart
building social game data, it can be applied more generally to scenarios with the task of inverse modeling of competitive agents
and provides a useful tool for many
 smart infrastructure applications where learning decision--making
behavior is crucial.

\section*{Acknowledgment}
We thank Mr. Christopher Hsu, Applications Programmer at CREST laboratory, who
developed and deployed the web portal application of the social game at UC Berkeley. 
\appendix
\section{Proof of Proposition~\ref{prop:suffcond}}
\label{app:proofs}
\begin{proof}[Proof of Proposition~\ref{prop:suffcond}]
 Suppose the assumptions hold. The constraints for each player do not depend
   on other players' choice variables. We can hold $x_{-i}^\ast$ fixed and apply
   Proposition 3.3.2~\cite{Bertsekas:1999fk} to the $i$-th player's optimization
   problem $\max\left\{ f_i(x_i, x_{-i}^\ast)\ | \  x_i\in \mc{C}_i\right\}$.
   Since each $f_i$ is concave and each $\mc{C}_i$ is a convex set, $x_i^\ast$ is a
   global optimum of the $i$-th player's optimization problem under the
   assumptions. Since this is true for each of the $i\in\{1,\ldots, n\}$
   players, $x^\ast$ is a Nash equilibrium.
 \end{proof}

\bibliographystyle{IEEEtran}
\bibliography{journal_utility}


%
%
%
%
%
%
%
%
%
%

\end{document}